\def\aujour{\number\day \space \ifcase\month\or
janvier\or f�vrier\or mars\or avril\or mai\or
juin\or juillet\or ao�t\or septembre\or octobre\or
novembre\or d�cembre\fi \space \number\year}
\def\cH{{\cal H}}
\def\cL{{\cal L}}
\newtheorem{thm}{Theorem}
\newtheorem{proof}{Proof}
\def\C{{\setbox0=\hbox{$\displaystyle{\rm C}$}
        \hbox{\hbox to0pt{\kern 0.4\wd0\vrule height 0.95\ht0\hss}\box0}}}
\def\Q{{\setbox0=\hbox{$\displaystyle{\rm Q}$}%
    \hbox{\raise 0.2\ht0\hbox to0pt{\kern 0.4\wd0\vrule height
    0.85\ht0\hss}\box0}}} 
\def\R{\mathop{\rm I\mkern -3.5mu R}} 
\def\N{\mathop{\rm I\mkern -3.5mu N}} 
\def\cHi{{\cal H}_{\infty}} 
\def\cH2{{\cal H}_2} 
\def\cL2{\mathop{\mathcal L}_{2}} 
\def\cRH2{\mathop{\cal R \cal H}_2} 
\def\cRL2{\mathop{\cal R \cal L}_{2}} 
\DeclareMathOperator*{\diag}{diag}
\DeclareMathOperator*{\der}{d}
\newcommand{\norm}[1]{\left\|{#1}\right\|}
\DeclareRobustCommand\sfrac[1]{\@ifnextchar/{\@sfrac{#1}}
                                            {\@sfrac{#1}/}}
\def\@sfrac#1/#2{\leavevmode\kern.1em\raise.5ex
         \hbox{$\m@th\fontsize\sf@size\z@
                           \selectfont#1$}\kern-.1em
         /\kern-.15em\lower.25ex
          \hbox{$\m@th\fontsize\sf@size\z@
                            \selectfont#2$}}
\author{Ibrahima N'Doye,}
\author{Wenqi Cai,} 
\author{Asem Alalwan,}
\author{Xiaobin Sun,}
\author{Wael Ghazy Headary,}
\author{Mohamed-Slim Alouini,}
\author{Boon S. Ooi,}
\author{Taous-Meriem Laleg-Kirati} 
\title{Reduction of the Beam Pointing Error for Improved Free-Space Optical Communication Link Performance}
\author{ \href{https://orcid.org/0000-0002-2576-5515}{\includegraphics[scale=0.06]{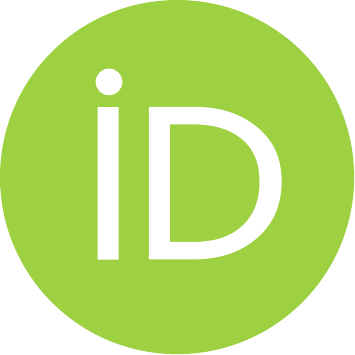}\hspace{1mm}I. N'Doye}, 
{W. Cai}, 
{A. Alalwan}, 
{X. Sun}, 
{W. G. Headary}, 
\href{https://orcid.org/0000-0003-4827-1793}{\includegraphics[scale=0.06]{orcid.pdf}\hspace{1mm}M.-S. Alouini}, 
{B. S. Ooi}, 
\href{https://orcid.org/0000-0001-5944-0121}{\includegraphics[scale=0.06]{orcid.pdf}\hspace{1mm}T.-M. Laleg-Kirati}
\thanks{This work has been supported by the King Abdullah University of Science and Technology (KAUST), Base Research Fund (BAS/1/1627-01-01) to Taous Meriem Laleg.} \\
Computer, Electrical and Mathematical Sciences and Engineering Division (CEMSE)\\
King Abdullah University of Science and Technology (KAUST)\\
Thuwal 23955-6900, Saudi Arabia \\
	\texttt{ibrahima.ndoye@kaust.edu.sa; slim.alouini@kaust.edu.sa; taousmeriem.laleg@kaust.edu.sa} \\
}
\begin{document}
\maketitle

\begin{abstract}
Free-space optical communication is emerging as a low-power, low-cost, and high data rate alternative to radio-frequency communication in short- to medium-range applications. However, it requires a close-to-line-of-sight link between the transmitter and the receiver. This paper proposes a robust $\cHi$ control law for free-space optical (FSO) beam pointing error systems under controlled weak turbulence conditions. The objective is to maintain the transmitter-receiver line, which means the center of the optical beam as close as possible to the center of the receiving aperture within a prescribed disturbance attenuation level.  First, we derive an augmented nonlinear discrete-time model for pointing error loss due to misalignment caused by weak atmospheric turbulence. We then investigate the $\cHi$-norm optimization problem that guarantees the closed-loop pointing error is stable and ensures the prescribed weak disturbance attenuation. Furthermore, we evaluate the closed-loop outage probability error and bit error rate (BER) that quantify the free-space optical communication performance in fading channels. Finally, the paper concludes with a numerical simulation of the proposed approach to the FSO link's error performance.
\end{abstract}

\keywords{Free-space optical (FSO) communications \and $\cHi$ pointing error control \and Weak turbulence \and Lognormal distribution \and Linear Matrix Inequality (LMI).}

\section{Introduction}
Free-space optical (FSO) communication systems have emerged as a viable technology that offers a large capacity usage (data, voice, and video) in short to medium-range applications. The range of applications include fixed-location terrestrial communication \citep{WeG:01}, communication between mobile robots \citep{KBGW:96}, underwater wireless optical communication (UWOC) \citep{OLPKAO:15}, airborne communication \citep{May:87}, and inter-satellite communication \citep{Cha:03}. FSO is a line of sight communication network with a free-space or atmosphere as a channel. This channel may be turbulent, causing absorption and scattering of the optical signal due to the presence of many factors, including fog, rain, snow, and temperature variations, resulting in its deterioration \citep{SiS:12}. Due to the temperature variations in the atmosphere, the refractive index changes creating Fresnel zones of different densities that scatter the laser beam from its projected path to travel diverse directions.

Atmospheric turbulence is a random phenomenon caused by the variation of temperature or humidity and the atmosphere's pressure along the propagation path. Specifically, atmospheric turbulence is highly variable and unpredictable due to weather effects \citep{HeW:10}. It would make the optical beam fluctuated when propagating through the channel and finally results in misalignment due to diffraction from particulates present in the channel, resulting in enlarging the beam's size to become more significant than the receiver aperture size. Misalignment can lead to intolerable signal fades and can significantly degrade system performance. In other words, atmospheric turbulence may lead to a significant degradation in the performance of the FSO communication systems \citep{APA:17}.

In addition to this, the signal propagating through the FSO channel is also perturbed by building vibrations, sways, and thermal expansions result in degradation of link performance \citep{ShC:02}, \citep{FaH:07}. This misalignment can lead to pointing errors, causing the optical beam's displacement along with horizontal and vertical directions. Hence, FSO links require accurate pointing \citep{BoV:09}, which means the pointing error needs to be very small to reduce the loss due to misalignment between the transmitter-receiver line.

Since FSO systems require precise pointing as the light signals are highly directional, the effect of pointing errors on link performance is a great interest for many potential applications. Several approaches have been proposed to address the LOS (Line-of-Sight) requirement in optical communication systems. In \citep{PFWPP:08}, large-area photomultiplier tubes are used to increase the receiver's field of view. Multiple LEDs and multiple photodiodes have also been used to avoid the need for active pointing during optical-communication \citep{RuA:12}, \citep{SHM:12}. However, these systems achieved the LOS through redundancy in transmitters and receivers, which resulted in a larger footprint, higher cost, and higher complexity.

Furthermore, different pointing strategies for FSO links have been proposed in \citep{ARK:97,KKN:07,YMD:05,Liu:09b,CNOAL:19}. However, these methods focus on combining existing components of the pointing assembly and atmospheric turbulence effects using manual and special detection techniques. Additionally, these methods use statistical performance analysis tools to mitigate pointing error effects but did not consider the controller design aspect. Many of them did not include the influence of vibration levels and atmospheric turbulence. To alleviate these shortcomings and arrive with an accurate pointing error solution in FSO links, we propose a robust control strategy for maintaining the optical link between free-space communication stations engaged in a laser communication channel.

To the best of the authors' knowledge, designing a beam pointing error control for improved FSO link performance has not been thoroughly investigated. One of this paper's motivations is to study and characterize the lognormal turbulence fading theoretically and experimentally to construct fully auxiliary control subsystems for robust FSO links.

The contributions of this paper to the existing body in the literature in pointing error control for FSO systems are as follows.
\begin{itemize}
  \item We propose an experimental setup and analyze the lognormal fading of the weak turbulence FSO channel. 
  \item We derive a discrete-time nonlinear model based on a predefined autocorrelation function using the implicit Milstein scheme to simulate the lognormal optical channel state.
  \item Using the discrete-time model, we propose the $\cHi$-norm optimization problem that guarantees the closed-loop pointing error is stable and ensures a prescribed disturbance attenuation level.
  \item We evaluate the quality of the closed-loop pointing error control, which shows that the proposed control law can maintain the optical beam's center at the center of the receiving aperture.
  \item Finally, we perform numerical simulation tests of the open-loop and closed-loop outage probability error and bit error rate (BER) that quantify the free-space optical communication's performance in fading channels.
 \end{itemize}

This paper is an extension of \citep{CNSAAAOL:18}, with the following significant new contributions.
\begin{itemize}
  \item  A discrete-time nonlinear model based on a predefined autocorrelation function is adapted to capture the lognormal fading process and provide a comprehensive treatment of the optical beam model. Indeed, the lognormal random process is represented as a solution of a stochastic differential equation (SDE), which is approximated by a general and effective discrete-time model.
  \item A robust $\cHi$ control law is proposed to reduce the pointing error and maintain the line-of-sight link of the optical beam under controlled lognormal weak-turbulence conditions. 
  \item The communication performance metrics using the outage probability error and the bit error rate (BER) are evaluated and analyzed.
\end{itemize}

The outline of the paper is organized as follows: In Section \ref{sec-WeakT}, we describe the experimental setup of the FSO channel in which we characterize the lognormal fading and derive the discrete-time lognormal optical channel state. In Section \ref{sec-PF}, we formulate the pointing error problem to maintain the centroid of the optical beam as close as possible to the center of the photodetector. In Section \ref{main}, the main results of the pointing error problem based on the $\cHi$-norm optimization method under controlled weak-turbulence conditions are derived. In Section \ref{results}, we evaluate the quality of the pointing error control and the communication performance metrics through numerical simulation tests. Finally, concluding remarks of the proposed robust pointing error control are presented in Section \ref{conclusion}.

\noindent {\bf Notations.} 
$M^{T}$ is the transpose of $M$. In symmetric block matrices, the symbol $(\star)$ in any matrix represents for any element that is induced by transposition. $\displaystyle  \norm{.}$ is the induced $2$-norm. 
$\mathbf{0}$ and $\mathbf{I}$ stand for the null matrix and the identity matrix of appropriate dimensions, respectively. 

\section{Experimental Setup and Dynamic Model for Weak Turbulence FSO Channel}\label{sec-WeakT}
The FSO link consists of a transmitter and receiver separated by the atmospheric channel. Here we set up an experimental system, the schematic diagram of the experimental line-of-sight FSO link is shown in Fig. \ref{fig-Block}. The optical signal amplitude through the FSO channel is fluctuated due to the atmospheric turbulence. Many statistical models of the intensity fluctuation through FSO channels have been proposed in the literature for distinct turbulence regimes.  For weak turbulence conditions, the most widely used model is the log-normal distribution, which has been validated through studies \citep{FaH:07}, \citep{MaR:08}, \citep{ZhK:02}. It is a well-known modeling approach and has been adopted in many calculations for the turbulence channel. This paper will focus on the weak turbulence; therefore, the lognormal model will be used throughout.
\begin{figure}[!t] 
        \centering
     \begin{overpic}[scale=0.51]{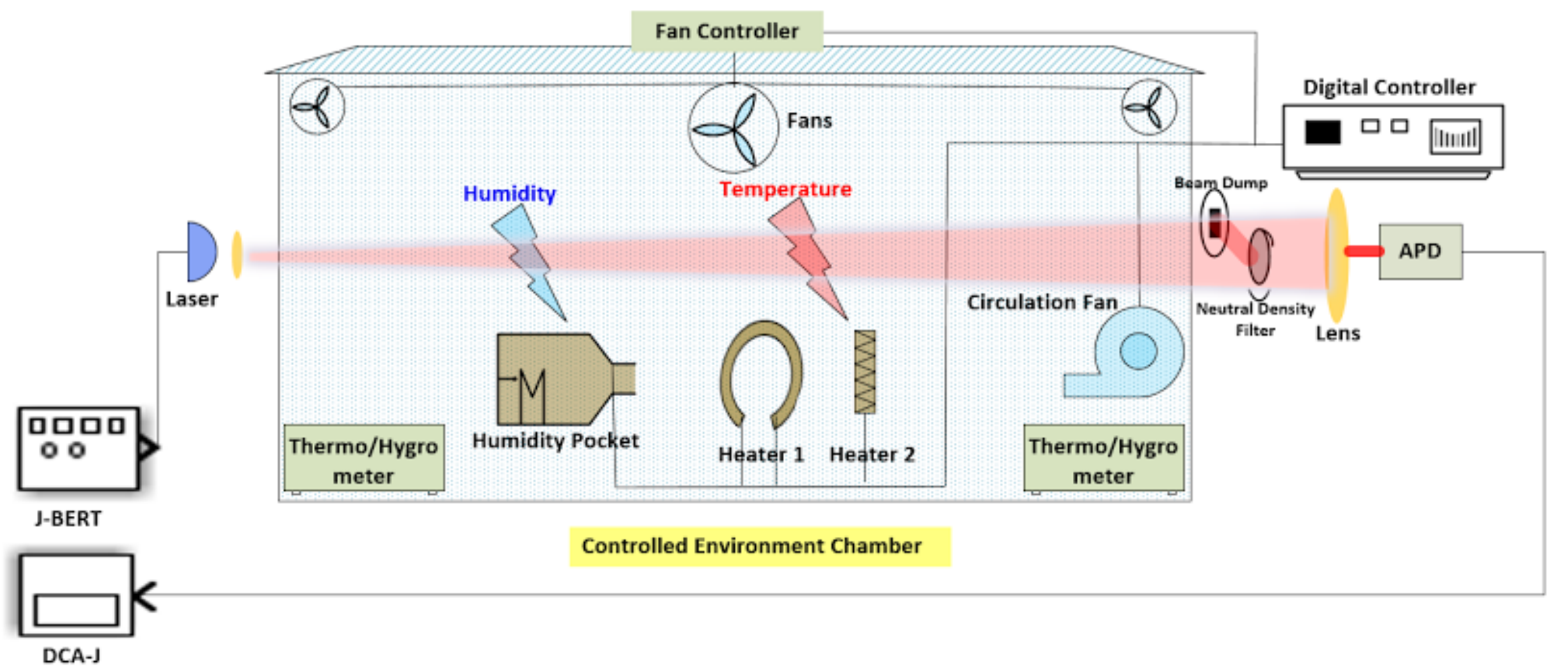}
            \end{overpic}  \vspace{-8pt}
              \caption{Block diagram of the experiment FSO setup.}\label{fig-Block}  
\end{figure}

\subsection{Lognormal Weak Turbulence Characterization}\label{sec-Log}
The laboratory atmospheric channel is a closed glass chamber with a dimension of $100\times35\times42 \mbox{cm}^3$ as depicted in Fig. \ref{fig-Chamber} with the aim of observing the effect of atmospheric turbulence on the laser beam propagating through the channel. The main parameters of the FSO link are given in Table \ref{FSO-para}.
\begin{figure}[!t] 
        \centering
     \begin{overpic}[scale=0.45]{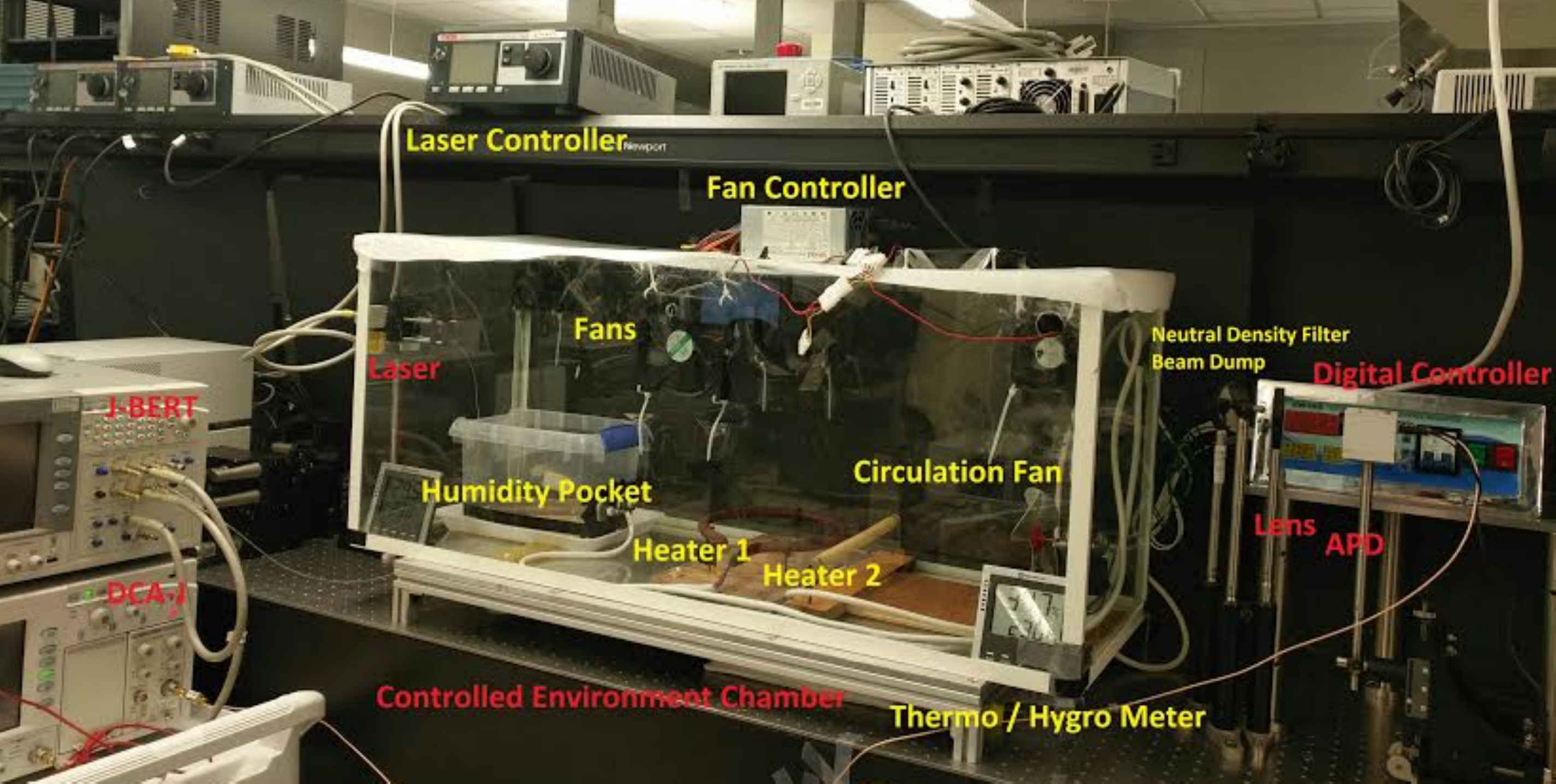}
            \end{overpic}  \vspace{-4pt}
              \caption{Experimental laboratory turbulence chamber.}\label{fig-Chamber}                
\end{figure}  
\begin{table}[!t]
\begin{normalsize}
\caption{Parameters of the FSO link.}
\begin{center}
\begin{tabular}{| c || c || c |}
  \hline \rowcolor{gray}
\textbf{Description}  & \textbf{Parameter}  & \textbf{Value}   \\ \hline  
Data &  Format & OOK NRZ \\  
~~~~ &PRBS length&$2^{10}-1$\\  
~~~~~~~~~ &Signal intensity $V_{\mbox{\scriptsize peak-to-peak}}$ &$1.78$V\\  
~~~~~~~~~ &~~~Data rate~~~ &$622.082$Mb/s \\  \hline
Laser diode &  Type  &LP642-SF20  \\  
~~~~~~~~~ &Peak wavelength &$642$nm\\  
~~~~~~~~~ &~~Optical Output Power  &$20$mW\\  
~~~~~~~~~ & Operating current/voltage &$0.089$A/$2.371$V\\  \hline
~~Photodetector~~ &  Type & APD210  \\  
~~~~~~~~~ &~Spectral range~ &$400 -1000$nm\\  
~~~~~~~~~ &~Maximum gain~ &$2.5\times10^5$V/W\\  
~~~~~~~~~ &~Detector diameter~ &$0.5$mm\\  
~~~~~~~~~ &~Rise time~ &$0.5$ns\\  \hline
Lens &  Type &LA1417-A  \\  
~~~~~~~~~ &Diameter&$50.8$mm\\  
~~~~~~~~~ &~~Focal length~ &$150$mm\\  \hline
Transmitter &  Type &~N4903B J-BERT~  \\  \hline
Receiver &  Type &86100C-DCA-J  \\  
~~~~~~~~~ &Sampling time&$0.2$ns\\  \hline
Chamber &  Dimension &$100\!\times\!35\!\times42 \mbox{cm}^3$  \\  \hline
\end{tabular}
\end{center}
\label{FSO-para}
\end{normalsize}
\end{table}
The probability density function (PDF) of the received irradiance $I$ due to the turbulence is derived by \citep{Osc:02}, \citep{GLRPI:12}
\begin{equation}\label{eq-aa}
p(I)=\frac{1}{\sqrt{2\pi\sigma^2}}\frac{1}{I}\displaystyle\exp\left\{-\frac{(\ln(I/I_0)+\sigma^2/2)^2}{2\sigma^2}\right\}
\end{equation}
where $I_0$ is the irradiance when there is no turbulence and $\sigma^2$ is the log-amplitude variance or scintillation index in the channel. 

The measured eye-diagrams for the received signal are depicted in Fig. \ref{fig-NoT} without turbulence and in Figs. \ref{fig-Turba} and \ref{fig-Turbb} with weak turbulence. We observe that the eye-opening is smaller in the presence of turbulence, which results in a considerable level of signal intensity fluctuation and also reduces the FSO performance link.
\begin{figure*}[!t]
   \begin{minipage}[c]{0.39\linewidth}  
        \centering
     \begin{overpic}[scale=0.36]{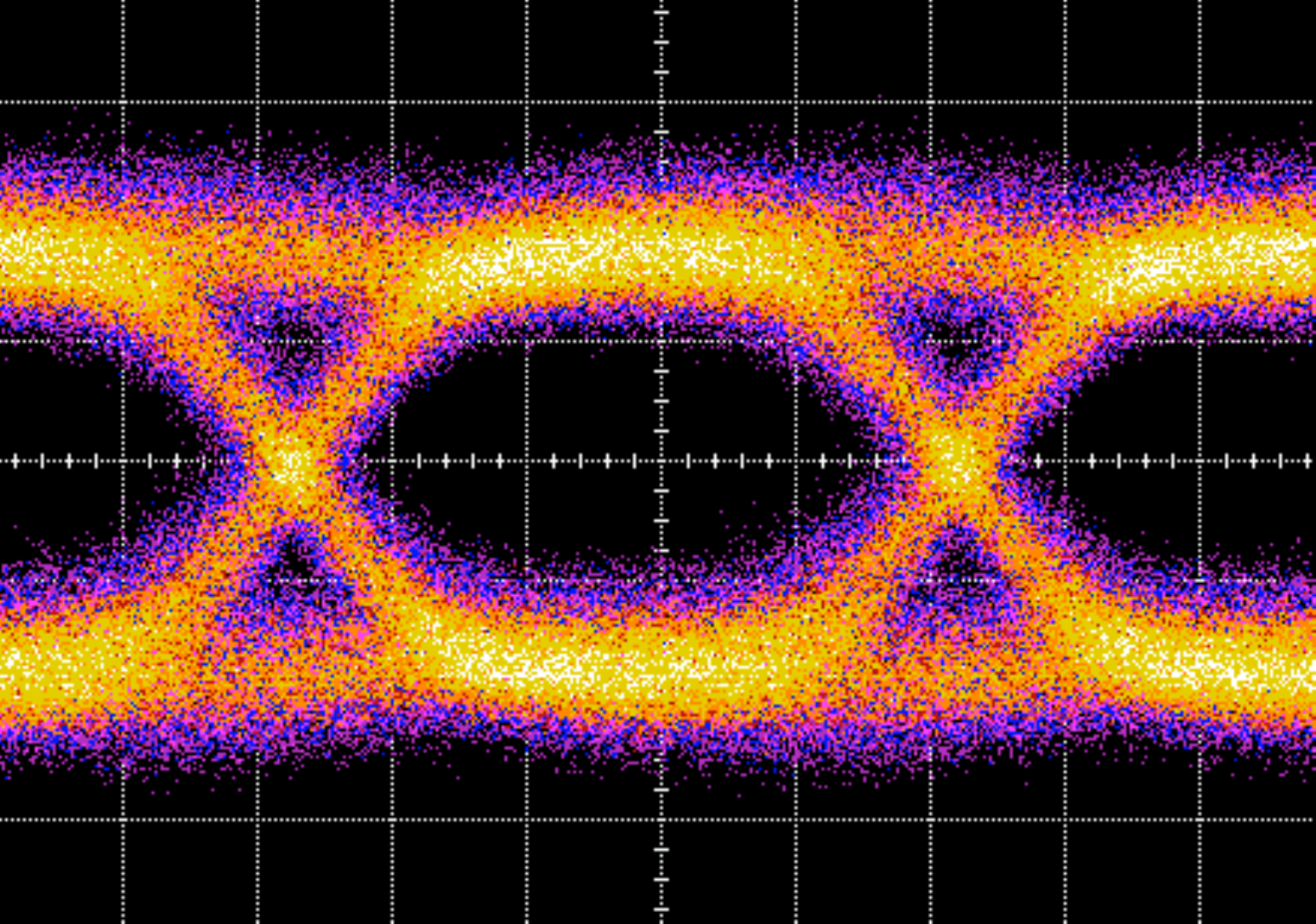}
            \end{overpic}  \vspace{2pt}
              \caption{Measured screen shot eye-diagram of received intensity signal without turbulence}\label{fig-NoT}
       \end{minipage}\hfill 
         \begin{minipage}[c]{0.39\linewidth}
     \begin{overpic}[scale=0.36]{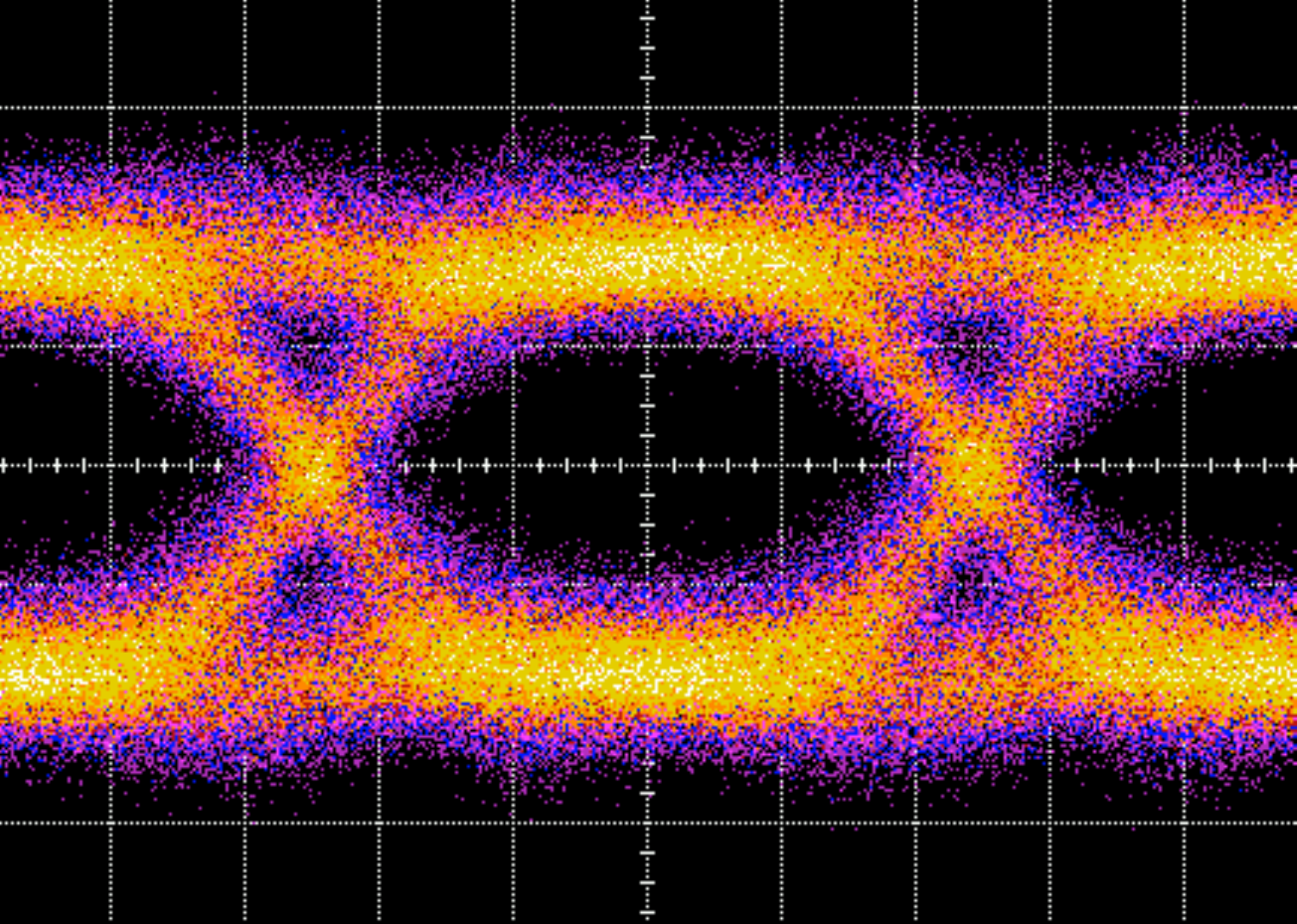}
            \end{overpic}  \vspace{2pt}
              \caption{Measured screen shot eye-diagram of received intensity signal under weak turbulence: $\sigma^2=0.0380$.}\label{fig-Turba}
   \end{minipage} 
  \end{figure*} 
 \begin{figure*}[!t]
\centering
     \begin{overpic}[scale=0.36]{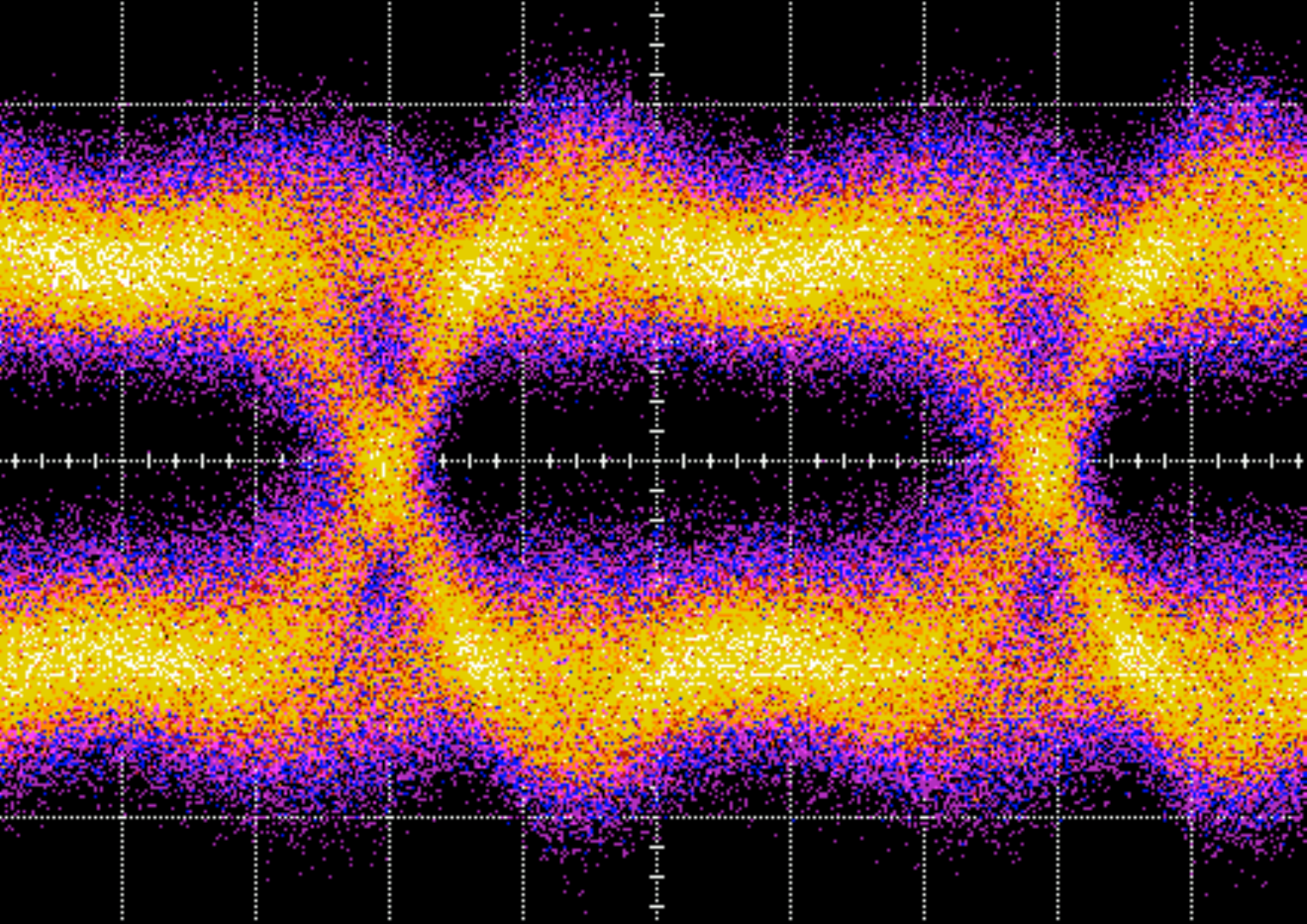}
            \end{overpic}  \vspace{2pt}
              \caption{Measured screen shot eye-diagram of received intensity signal under weak turbulence: $\sigma^2=0.0576$.}\label{fig-Turbb}  
      \vspace{0.5cm} 
\end{figure*}
 
 \begin{figure*}[!t]
\centering
      \begin{overpic}[scale=0.26]{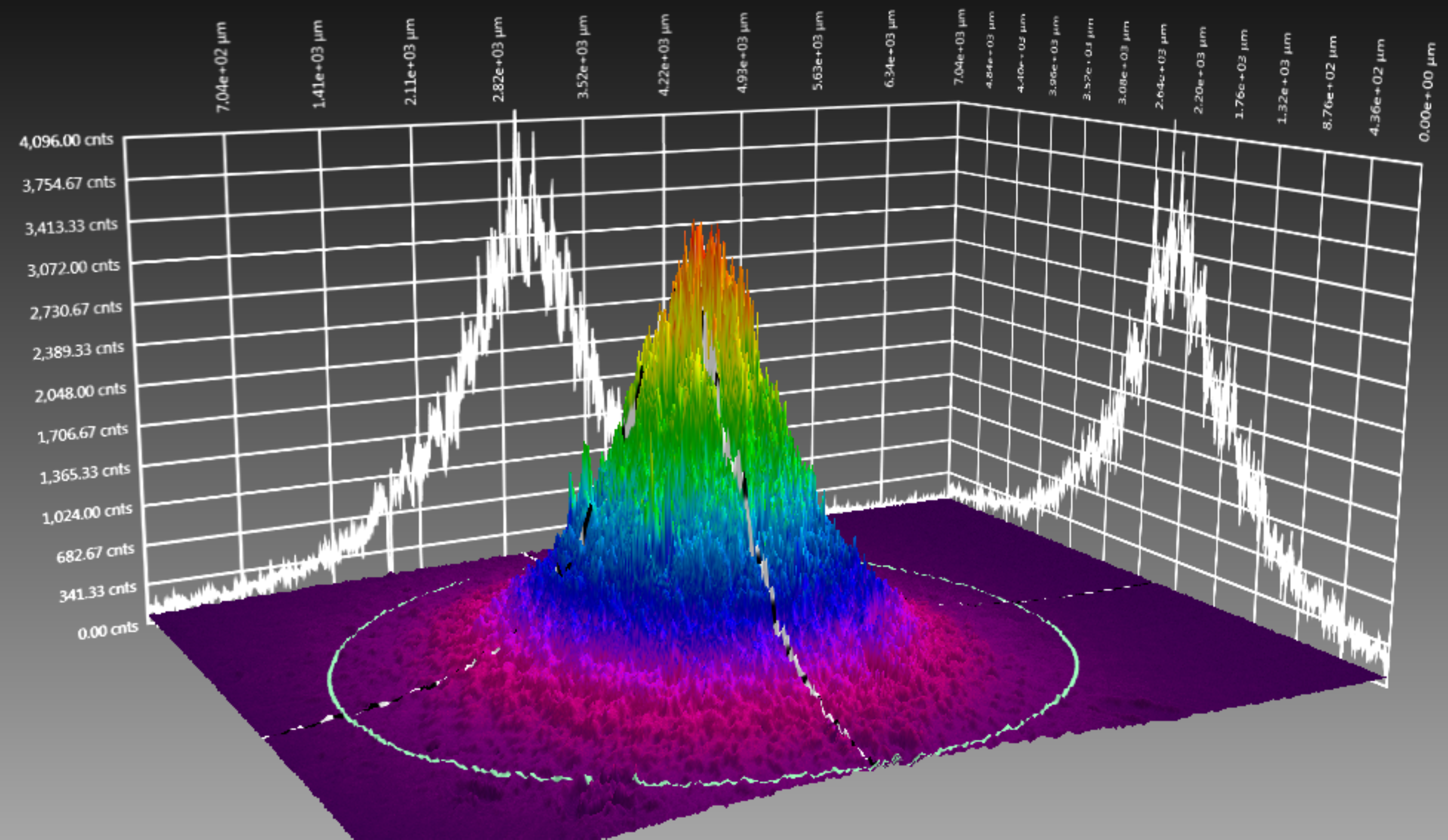}
         \put(-1.5,22){\scriptsize  \begin{rotate}{90} Intensity [mv]\end{rotate}}
         \put(6,5){\footnotesize  $x$[mm]}
            \put(64,3){\footnotesize  $y$[mm]}
            \end{overpic}  \vspace{2pt}
              \caption{Experimental snapshot intensity of the laser beam obeying to Gaussian distribution.}\label{fig-3D}    
\end{figure*}

\begin{figure*}[!t]
         \vspace{0.5cm}
   \begin{minipage}[c]{0.40\linewidth}  
        \centering
     \begin{overpic}[scale=0.39]{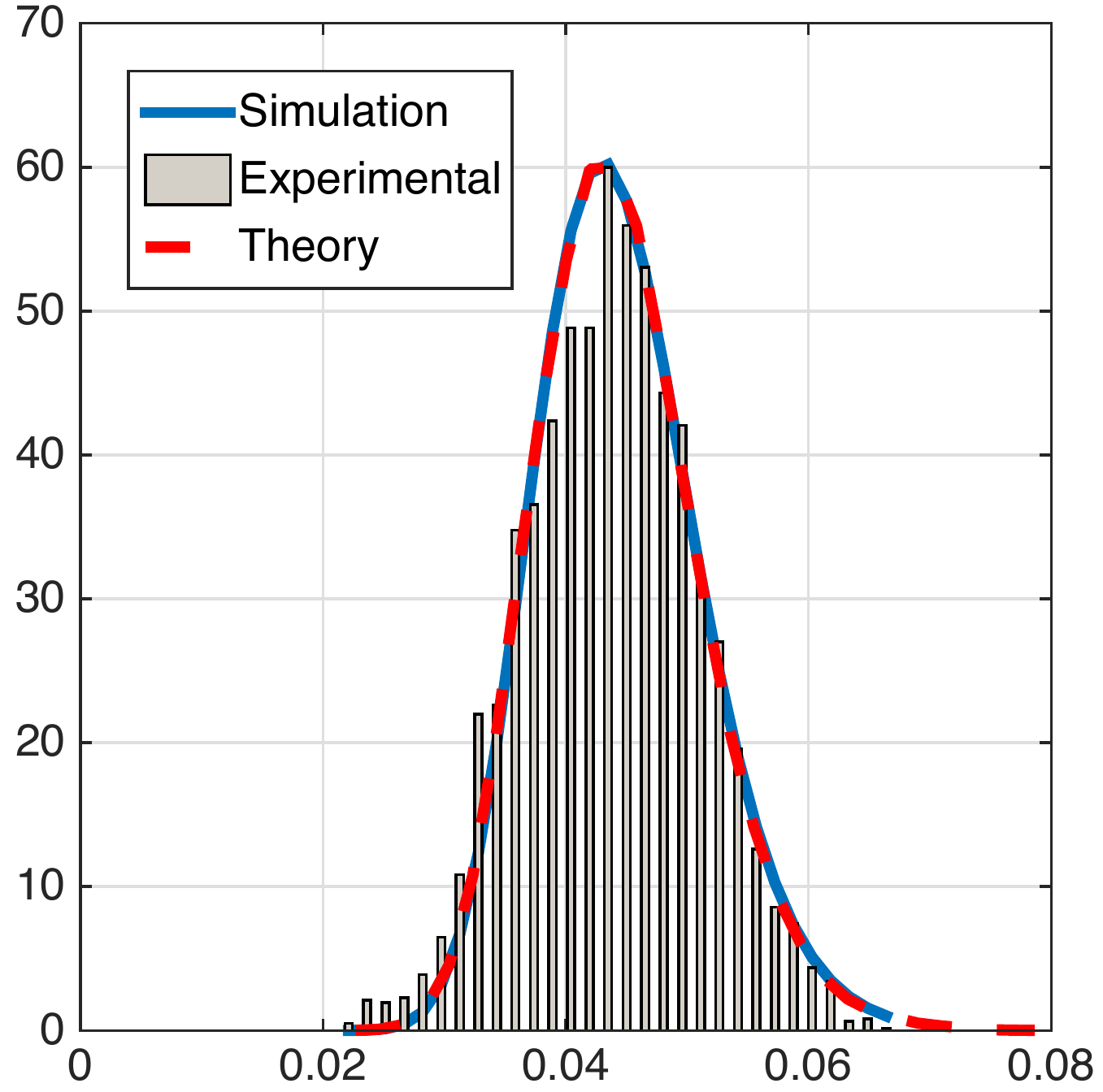}
   \put(-2,20){\scriptsize  \begin{rotate}{90} Probability density \end{rotate}}
 \put(33,-5){\scriptsize  Irradiance (mv)}
            \end{overpic}  \vspace{2pt}
              \caption{Gaussian PDF received distribution without turbulence (the curve fitting is shown by solid lines)}\label{fig-PDFno}
       \end{minipage}\hfill 
         \begin{minipage}[c]{0.40\linewidth}
     \begin{overpic}[scale=0.38]{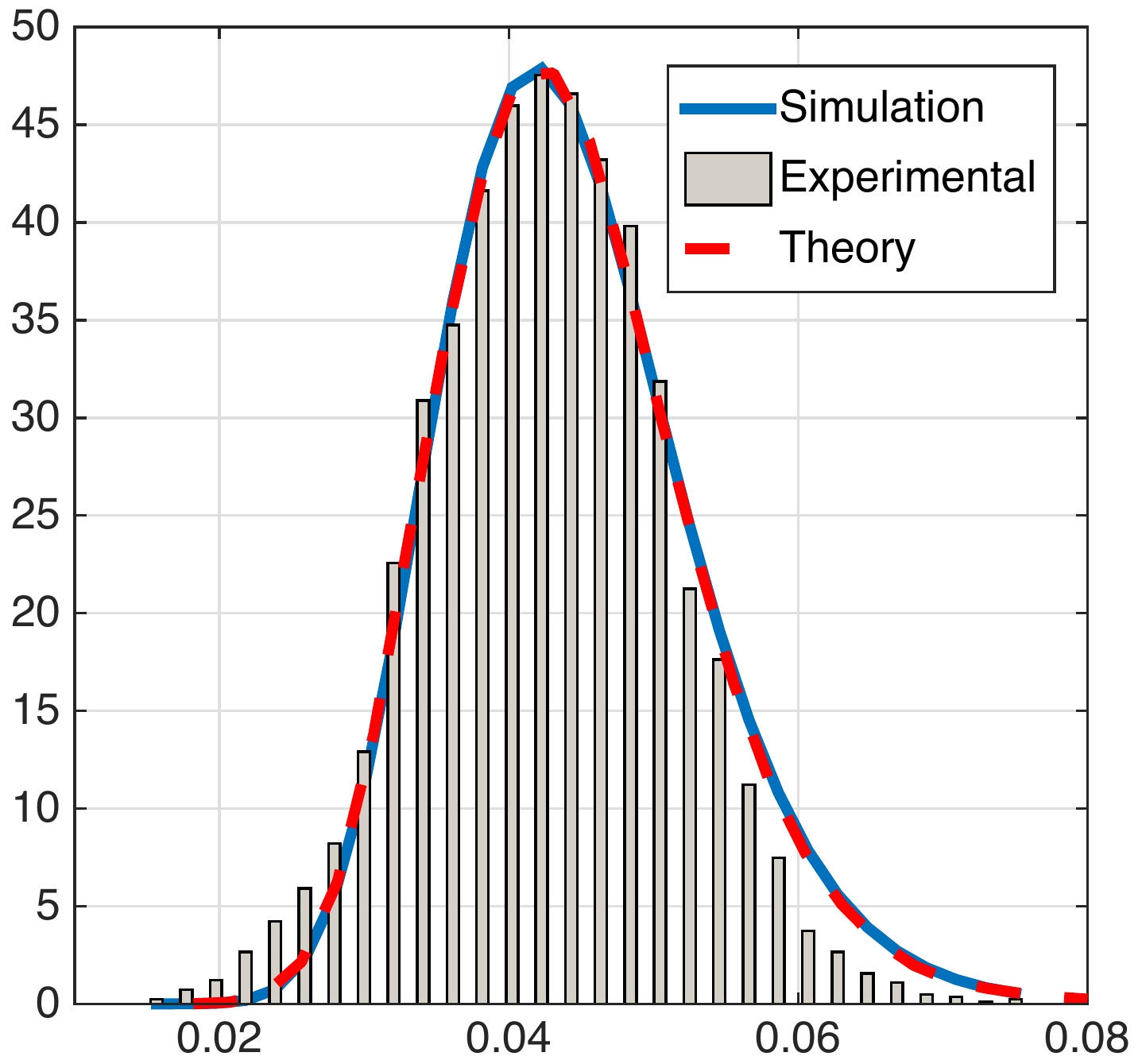}
   \put(-2,20){\scriptsize  \begin{rotate}{90} Probability density \end{rotate}}
 \put(33,-5){\scriptsize  Irradiance (mv)}
            \end{overpic}  \vspace{2pt}
              \caption{Log-normal PDF received distribution under weak turbulence: $\sigma^{2}\!=\!0.0380$ (the curve fitting is shown
by solid lines).}\label{fig-PDFTurb}
   \end{minipage} 
 \end{figure*}
   
\begin{figure*}[!t]
\centering
     \begin{overpic}[scale=0.38]{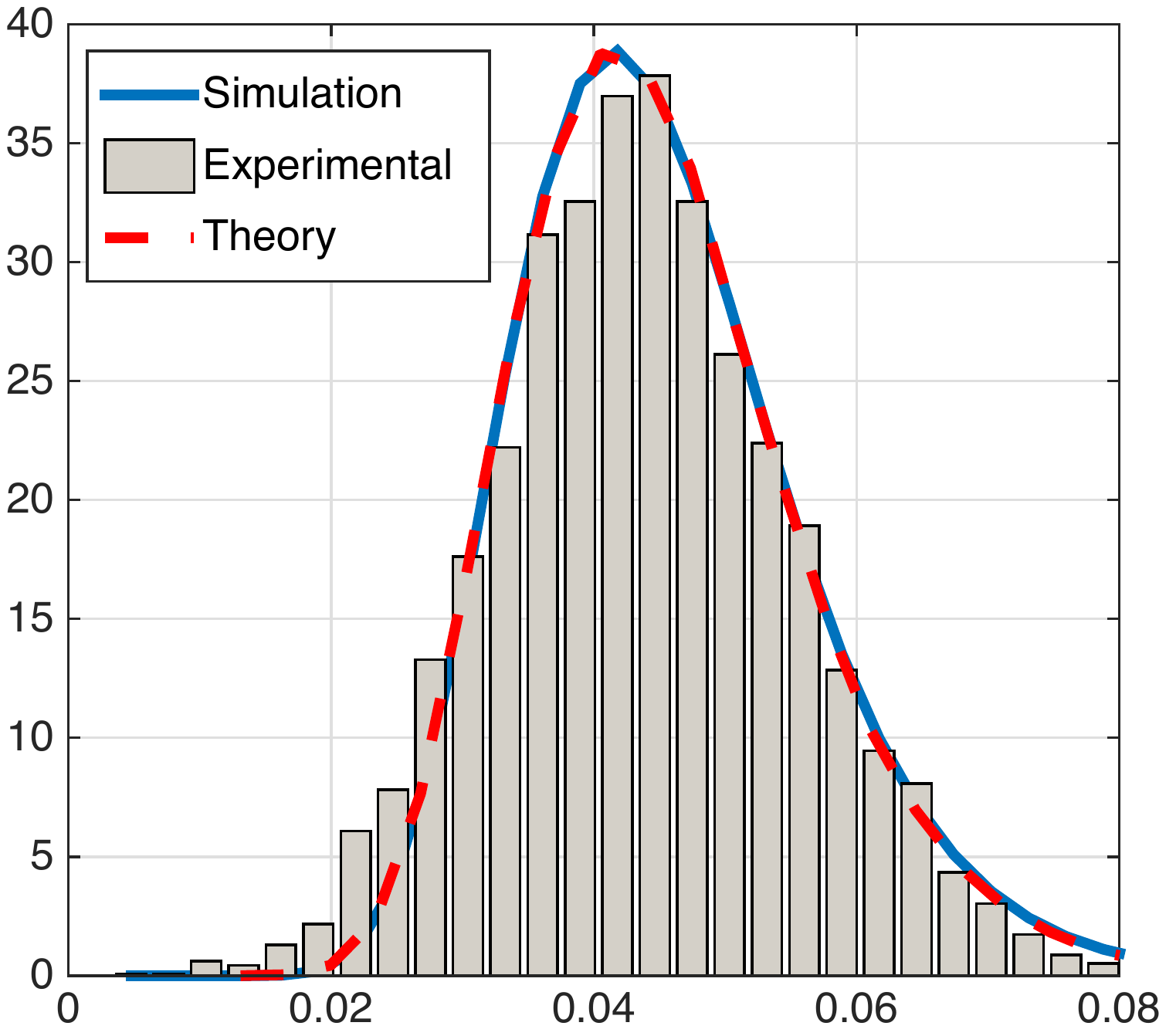}
   \put(-2,20){\scriptsize  \begin{rotate}{90} Probability density \end{rotate}}
 \put(33,-5){\scriptsize  Irradiance (mv)}
            \end{overpic}  \vspace{2pt}
\caption{Log-normal PDF received distribution under weak turbulence: $\sigma^{2}\!=\!0.0576$ (the curve fitting is shown
by solid lines).}\label{Turb_fig2}  
         \vspace{0.5cm}   
\end{figure*}

Fig. \ref{fig-PDFno} shows the histogram and the curve fitting plots of the received intensity signal without turbulence. As we can see, the PDF distribution is nearly Gaussian for lower values of $\sigma{^2}$. The experimental snapshot intensity sensed by the PD can be fitted with a nearly Gaussian distribution, as illustrated in  Fig. \ref{fig-3D}.

Figs. \ref{fig-PDFTurb} and \ref{Turb_fig2} illustrate the histograms and the curve fitting plots of the received intensity signal with turbulence. It is clear that the PDF has a good fitting with the lognormal distribution, and the estimated scintillation index $\sigma^{2}$ falls within the range of $[0,\,0.1]$ is characterized by weak turbulence regime \citep{GLRPI:12}. The results also show the theoretical red-dotted-lines fit well with the simulation solid-blue-lines, which demonstrate the close resemblance between the PDFs fading statistics based on lognormal distributions of the theoretical predefined autocorrelation function (see \citep{PKL:05,KoL:95,BEA:15,NeS:14} and the simulation for short-range turbulent-channel communication experiment. As $\sigma^{2}$ increases, the distribution is more skewed with a long tail toward the infinity and reduced peak of probability density as a result of signal fading.

 \subsection{Discrete-Time Lognormal Optical Channel State}\label{sec-Mod}
The most important property of the optical beam is the PDF of gain samples, so we use a modeling approach to get the optical beam position. The modeling approach is based on 1D lognormal distributed samples with a corresponding correlation function, as illustrated in Figs. \ref{fig-PDFno}, \ref{fig-PDFTurb} and \ref{Turb_fig2}. The first step is to efficiently approximate the weak turbulence level of the FSO chamber, which means to emulate the variance of the atmospheric turbulence. We verify the theoretical results of the lognormal process with the predefined autocorrelation function using the implicit Milstein scheme for the channel states, which converges to a simple discrete-time differential equation \citep{NeS:14}, \citep{Byk:15}. The implicit discrete-time Milstein scheme for the lognormal distribution describing the simulated lognormal optical beam channel state and its relative position is generated by the following nonlinear discrete-time state-space equations \citep{NeS:14}, \citep{KoL:95}, \citep{Byk:15} 
\begin{equation}\label{eq-1}
\left\{\begin{array}{llll}
x_{k+1}^p=a^px_{k}^p+\varphi(x_{k}^p)+b^pu_{k}^p + r^p w_{k}^p,\\
\theta_k=c^px_{k}^p,
\end{array}\right.
\end{equation}
where
\begin{equation}\label{autocorrelation-eq-2}
\varphi(x^p_{k})=-\frac{K}{2\sigma^2x^p_k}\left[\ln(x^p_k/I_0)\right], \quad r^p=\sqrt{K\times \Delta t},
\end{equation}
$K$ is given by
\begin{equation}\label{autocorrelation-eq-3}
K=\frac{2I_0^2\exp(\sigma^2)[\exp(\sigma^2)-1]}{\tau_c},
\end{equation}
where $k\in \mathbb{Z^+}$ is the set of all nonnegative integers, $x_{k}^p\in\R$ is the simulated optical channel state which can be  considered as a moving object, $\theta_k\in\R$ is the position of the optical beam transmitter,  $w_{k}^p\in\R$ are samples of the white Gaussian noise and $\tau_c$ is a predefined correlation time, $u_{k}^p$ is the bounded control input through which the optical channel state and transmission angle are changed. $\Delta t$ is the sampling time, $a^p\!\in\!\R$, $b^p\!\in\!\R$, $c^p\!\in\!\R$ and $r^p\!\in\!\R$ are constant values. The nonlinearity $\varphi(x_{k}^p)\!\in \mathcal{P}\subseteq \!\R^+$ represents here the full signal strength model. It is differentiable with $\varphi(0)=0$, locally Lipschitz and also monotonically increasing in  $\mathcal{P}$.
The time-correlated position of the transmitted optical beam that was generated by \eqref{eq-1} with $\sigma^2=0.0380$, $\tau_c=0.1$, $a^p=1$, $b^p=1$ and $c^p=1$ is illustrated in Fig. \ref{fig-beam-position}.

\begin{figure}[!t]
\centering
      \begin{overpic}[scale=0.36]{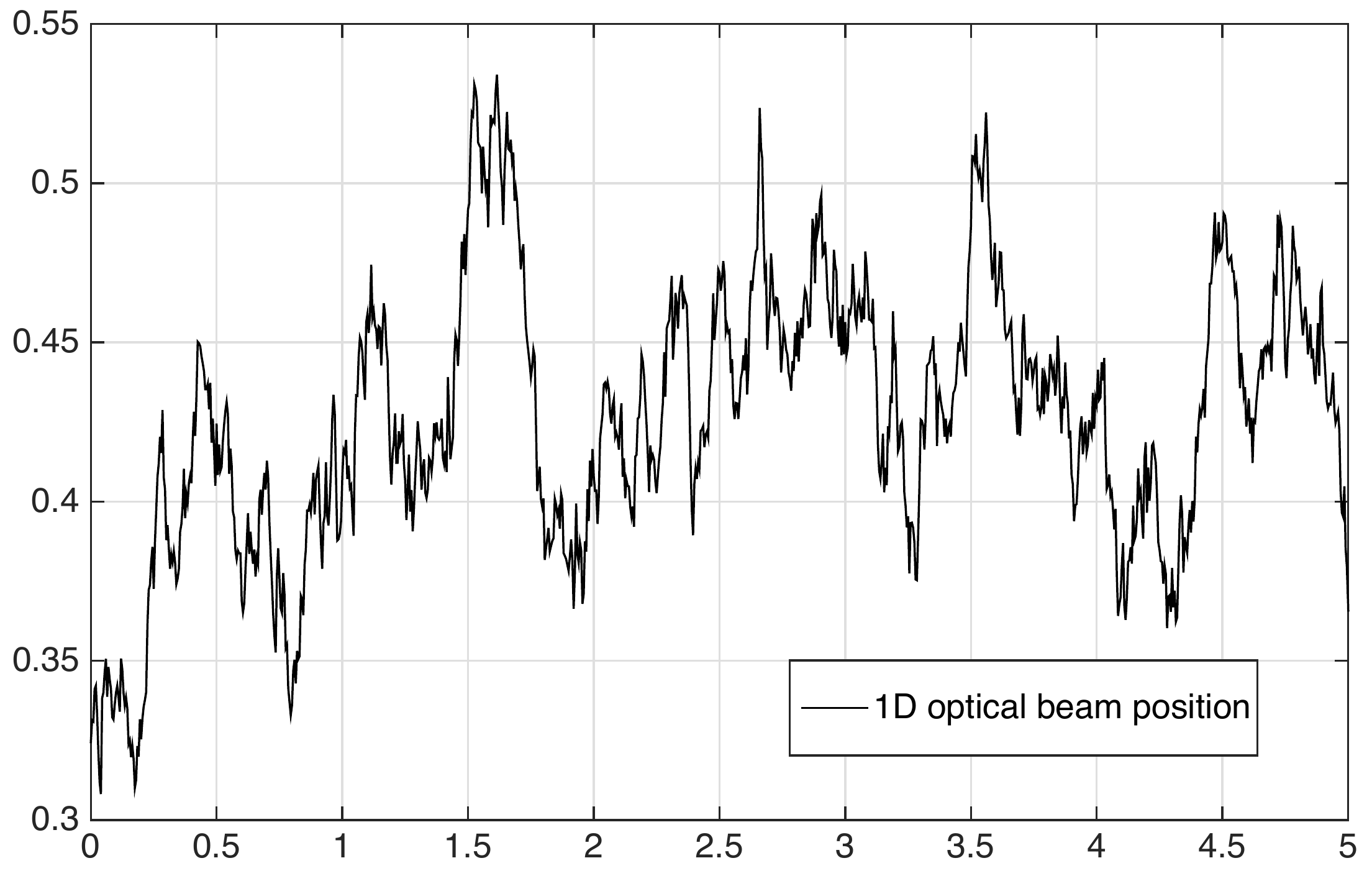}
         \put(-0.5,22){\scriptsize  \begin{rotate}{90} $\theta$-displacement [{\si mm}]\end{rotate}}
 \put(44,-2){\scriptsize  Time [{\si sec}]}
            \end{overpic}  \vspace{-2pt}
              \caption{Position of the transmitted optical beam motion versus time.}\label{fig-beam-position}    
\end{figure}

\subsection{Receiving Aperture Model}\label{Receiver}
Although the photodetector's receiver aperture is fixed, it still suffers some random physical vibrations due to thermal expansion, voltage jitter, etc (see \citep{CNOAL:19}). So, the receiving aperture motion is assumed similar to the Brownian motion of a particle subjected to excitation, as showed in Fig. \ref{fig-Brownian}. The Brownian motion is given by the generalized differential equation \citep{VoV:13,AFF:13,AFF:14}
\begin{equation}\label{RAperture-eq1}
m\frac{\der^2\! x(t)}{\der\!t^2}=-\gamma_1\frac{\der\! x(t)}{\der\! t}-k_1\frac{\der\! x(t)}{\der\! t}+\sqrt{2k_BT\gamma_1}W(t),
\end{equation}
where $x(t)$ is the trajectory of the particle with respect to the center, $m$ is the particle mass, $\gamma_1$ is the friction exerted by the surrounding medium on the particle, $k$ is the optical trap stiffness, $k_BT$ is the thermal energy unit, $k_B$ is the Boltzmann constant, $T$ is the absolute temperature and $W(t)$ is white Gaussian noise.
The discrete-time state space of the receiving aperture model is derived from the discretized particle Brownian motion \eqref{RAperture-eq1}. It is given as follows \citep{VoV:13,AFF:13,AFF:14}
\begin{equation}\label{RAperture-eq2}
\left\{\begin{array}{llll}
x_{k+1}^l= a^lx_{k}^l+r^lw_{k}^l,\\
\alpha_k=c^lx_{k}^l,
\end{array}\right.
\end{equation}
where $x_k^l$ is the source position, $\alpha_k$ is the measured position of the receiving aperture motion, $w_{k}^l$ is a standard white Gaussian noise, $\Delta T$ is the discretization time step, $a^l\!=\!\left(1-\frac{k_1\Delta T}{\gamma_1}\right)$, $r^l\!=\!\sqrt{2k_BT\gamma_1}$ and $c^l\!=\!1$. 
\begin{figure}[!t]
\centering
      \begin{overpic}[scale=0.35]{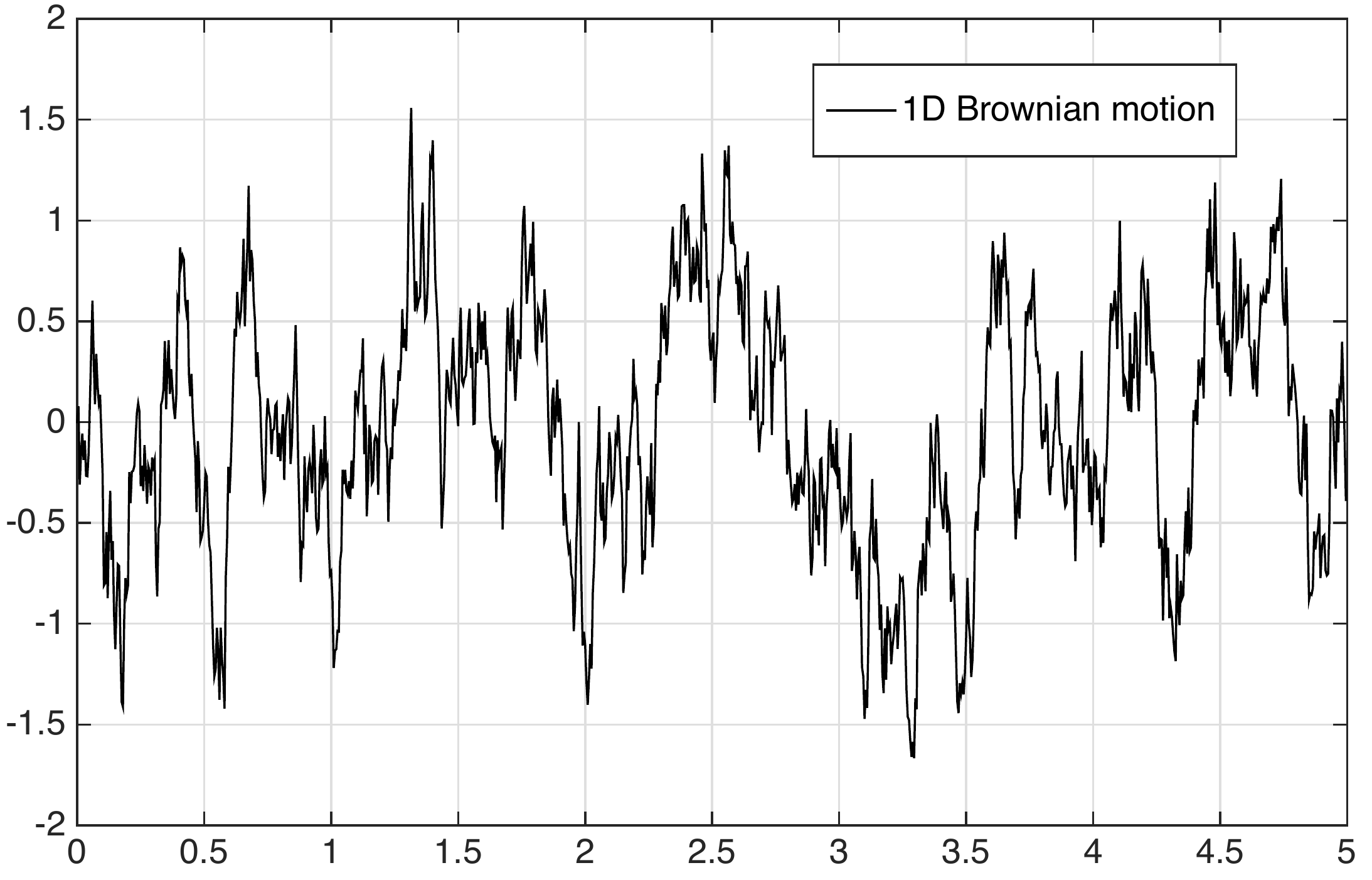}
         \put(-0.5,22){\scriptsize  \begin{rotate}{90} $\alpha$-displacement [{\si mm}]\end{rotate}}
 \put(44,-2){\scriptsize  Time [{\si sec}]}
            \end{overpic}  \vspace{2pt}
              \caption{Position of the 1D receiving aperture motion versus time.}\label{fig-Brownian}    
\end{figure}

\section{Problem Formulation}\label{sec-PF}
We consider a one-way optical link that consists of an optical transmitter and an optical receiver. Both are subject to relative motions. The emitted optical beam has a non-uniform intensity profile, which is assumed to be Gaussian \citep{GaK:95} and can be considered as a moving object. The goal is to control how the position of the object change in time. The receiver's aperture is assumed to be smaller than the received optical beam so that the receiver can collect only a fraction of the optical beam \citep{KKN:07}. This captured fraction can be enlarged by active pointing whose objective is to maintain the optical beam's center at the center of the receiving aperture. A photodetector is used at the receiver to measure the optical beam's intensity profile that strikes its aperture \citep{KKN:07}. The output is then sent as feedback through an optical link or low-bandwidth RF channel and used to adjust the transmitter's position. Fig. \ref{fig-1} illustrates the block diagram of this active pointing scheme under weak controlled turbulence.

\begin{figure}[!htbp]
\vspace{0.5cm}
\centering
      \begin{overpic}[scale=0.23]{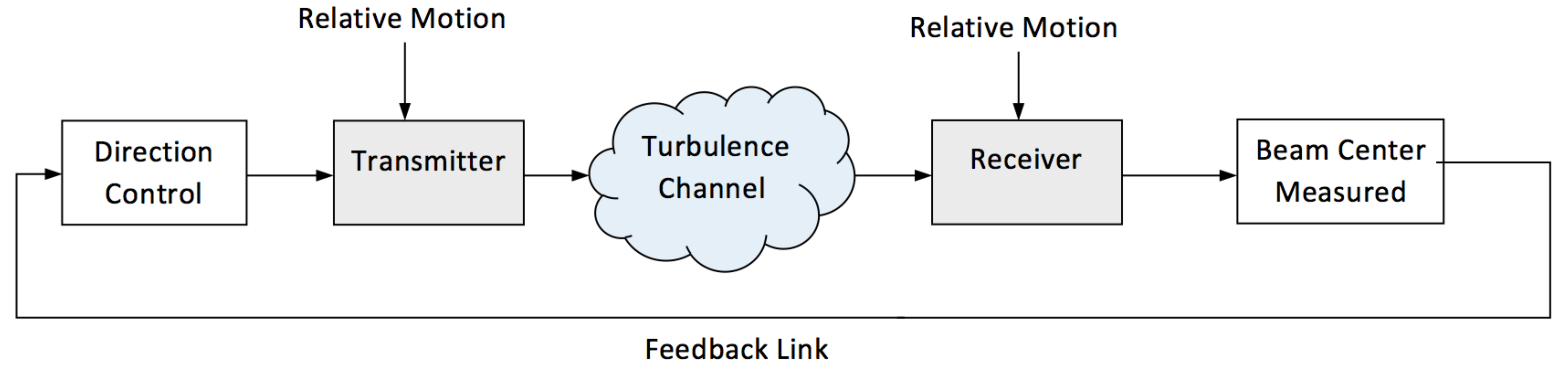}
            \end{overpic}  \vspace{-4pt}
              \caption{Active pointing scheme for a short range free-space optical channel.}\label{fig-1} 
              \vspace{0.5cm}   
\end{figure}

The discrete-time model considered in this study has been derived from the model structure that was introduced for the stochastic state-space model \citep{KKN:07}, \citep{Byk:15}. Indeed, the lognormal random process is represented as a solution of a stochastic differential equation (SDE), which is approximated by a general and effective discrete-time model. The model mainly describes the relative position's effect between the transmitter and the receiver on the signal strength. We denote the transmitted optical beam position to a fixed coordinate system by the vector  $\theta_k$ and the position of the receiving aperture of the stations to the same coordinate system by $\alpha_k$.
We assume that the receiving aperture is held perpendicular to the line-of-sight optical beam. The relative displacement of the optical beam center to the receiving aperture center is given by $y_k=d(\theta_k-\alpha_k)$ where $d$ is the distance between the transmitter and receiver. Fig. \ref{fig-2} illustrates the optical beam in the plane of the receiving aperture and the displacement vector $y_k$.

\begin{figure}[!t]
\centering
      \begin{overpic}[scale=0.66]{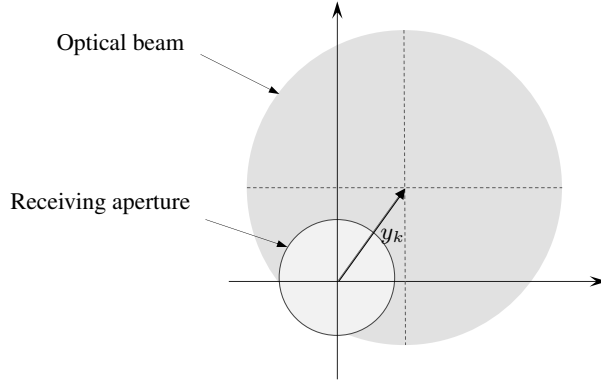}
       \put(45,35){\footnotesize  $y_k$}
        \put(-43,42){\footnotesize  Receiving aperture}
         \put(-32,80){\footnotesize  Optical beam}
            \end{overpic}  \vspace{-12pt}
              \caption{Optical beam, receiving aperture and the displacement vector $y_k$.}\label{fig-2}    
\end{figure}

The pointing error $y_k=d(\theta_k-\alpha_k)$ is a linear function of $x_{k}^p$ and $x_{k}^l$. It can be written as the following augmented system form
\begin{equation}\label{eq-proof1}
\left\{\begin{array}{llll}
x_{k+1}=\mathbb{A}x_k+\varphi(x_{k})+\mathbb{B}u_k+\mathbb{R}w_k,\\
\epsilon_k=\mathbb{C}x_k,
\end{array}\right.
\end{equation}
where
\begin{align*}
&\mathbb{A}\!=\!\!\begin{bmatrix} a^p & \mathbf{0} \\ \mathbf{0} & a^l\end{bmatrix}\!, \, \mathbb{B}\!=\!\begin{bmatrix} b^p \\ \mathbf{0}\end{bmatrix}\!,\, \mathbb{R}\!=\!\begin{bmatrix} r^p \\ r^l \end{bmatrix}\!, \, \mathbb{C}\!=\! \begin{bmatrix} r^p & -r^l \end{bmatrix},
\end{align*}
and $x_k\!=\!\begin{bmatrix} x_{{k}}^{{p}} \\ x_{{k}}^{{l}}  \end{bmatrix} \in \R^{2}$ is the augmented state vector, $w_k=\begin{bmatrix} w_{{k}}^{{p}} & w_{{k}}^{{l}}  \end{bmatrix}$ is the augmented disturbance vector, $y_k=d\epsilon_k$ is the pointing error and $\epsilon_k=\theta_k-\alpha_k$.

Since $\varphi(x_{k}^p)$ is  Lipschitz, then the augmented nonlinearity $\varphi(x_{k})$ is assumed to satisfy the following bound
\begin{equation}\label{condition1aa}
\varphi(x_{k})^T\varphi(x_{k})\leqslant  x_{k}^T\mathbb{H}^T\mathbb{H}x_{k},
\end{equation}
where $\mathbb{H}$ is a constant matrix.

The robust control problem studied in this paper consists in minimizing the closed-loop pointing error while ensuring the disturbance attenuation level. The objective is to maintain the centroid of the optical beam as close as possible to the center of the photodetector. This pointing problem can be interpreted as finding a set-point $u_k=-\mathbb{K}x_k$ depending on $x_k$ such that the following $\cHi$ norm of the pointing error $y_k$ with respect to disturbance $w_k$ is satisfied {\it i.e}: 
\begin{equation}\label{condition1}
\norm{y_k}\leqslant \varepsilon \norm{w_k},
\end{equation}
with $\varepsilon$ being the smallest positive real to be minimized. 

\section{Robust $\cHi$ Pointing Error Control for FSO}\label{main}
In this section, we consider the $\cHi$-norm optimization problem that guarantees the closed-loop pointing error $y_k$ is stable and ensures the disturbance attenuation level $\norm{y_k}\leqslant \varepsilon \norm{w_k}$ for a prescribed attenuation level $\varepsilon>0$.


The following theorem provides the stability and the absolute pointing error of the augmented system \eqref{eq-proof1}.
\begin{thm}\label{thm-1}
If there exist matrices $\mathbb{Y}=\mathbb{Y}^T>0$, $\mathbb{S}$ and scalar $\varepsilon$ such that the following LMI condition 
\begin{equation*}
\min_{\mathbb{S}, \mathbb{Y}>0}\,  \varepsilon \quad \mbox{subject to}
\end{equation*}
\begin{equation}\label{eq-thm-1}
\begin{bmatrix} -\mathbb{Y}&\mathbf{0} & \mathbf{0} &\mathbb{Y}\mathbb{A}^T\!-\!\mathbb{S}^T\mathbb{B} & \mathbb{Y}\mathbb{C}^T & \mathbb{Y}\mathbb{H}^T
\\ (\star) & -\varepsilon^2 \mathbf{I} & \mathbf{0} &\mathbb{R}^T & \mathbf{0} & \mathbf{0}
\\ (\star) & (\star) &-\delta \mathbf{I} &\mathbb{Y}& \mathbf{0} & \mathbf{0}
\\ (\star) & (\star) &(\star)  &-\mathbb{Y} & \mathbf{0}  & \mathbf{0}
\\ (\star) & (\star) &(\star)  &(\star) & -\mathbf{I}& \mathbf{0}
\\ (\star) & (\star) &(\star)  &(\star) & (\star) & -\delta^{-1}\!\mathbf{I}\end{bmatrix}\!<\!0.
\end{equation}
has a feasible solution with $\mathbb{K}=\mathbb{S}\mathbb{Y}^{-1}$, then
\begin{enumerate}
\def\labelenumi{\bf\theenumi)}\def\theenumi{\roman{enumi}}
  \item the augmented closed-loop error system \eqref{eq-proof1} is stable for $w_k=0$.
    \item and the pointing error $y_k$ satisfies the disturbance attenuation condition for $w_k\neq 0$ and a specific attenuation factor $\varepsilon>0$, 
  \begin{equation}\label{condition1bis}
\norm{y_k}\leqslant \varepsilon\norm{w_k}.
\end{equation}
\end{enumerate}
\end{thm}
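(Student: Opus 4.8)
The plan is to certify both claims at once with a single quadratic storage function $V(x_k)=x_k^T\mathbb{P}x_k$, $\mathbb{P}=\mathbb{P}^T>0$, and then to linearize the resulting matrix inequality through the congruence $\mathbb{Y}=\mathbb{P}^{-1}$ together with the change of variable $\mathbb{S}=\mathbb{K}\mathbb{Y}$, so that the controller is recovered as $\mathbb{K}=\mathbb{S}\mathbb{Y}^{-1}$. Substituting the state feedback $u_k=-\mathbb{K}x_k$ into \eqref{eq-proof1} yields the closed loop $x_{k+1}=(\mathbb{A}-\mathbb{B}\mathbb{K})x_k+\varphi(x_k)+\mathbb{R}w_k$. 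I would compute $\Delta V=V(x_{k+1})-V(x_k)$ and, introducing the augmented vector $\zeta_k=[\,x_k^T\ \ \varphi(x_k)^T\ \ w_k^T\,]^T$, collect everything as one quadratic form $\zeta_k^T\,\Xi\,\zeta_k$, where the coupling between the three coordinates is produced by the Lyapunov term $\mathcal{M}^T\mathbb{P}\mathcal{M}$ with $\mathcal{M}=[\,\mathbb{A}-\mathbb{B}\mathbb{K}\ \ \mathbb{I}\ \ \mathbb{R}\,]$.

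The Lipschitz bound \eqref{condition1aa} is brought in by the S-procedure: since $x_k^T\mathbb{H}^T\mathbb{H}x_k-\varphi(x_k)^T\varphi(x_k)\ge 0$, for a free scalar $\delta>0$ it suffices to enforce the dissipation inequality
\[
\Delta V+\epsilon_k^T\epsilon_k-\varepsilon^2 w_k^Tw_k+\delta\bigl(x_k^T\mathbb{H}^T\mathbb{H}x_k-\varphi(x_k)^T\varphi(x_k)\bigr)<0 ,
\]
which is exactly $\zeta_k^T\Xi\,\zeta_k<0$ once the $\delta$-term is absorbed into $\Xi$. For claim \textbf{(i)} I set $w_k=0$: along trajectories the Lipschitz slack is nonnegative and $\epsilon_k^T\epsilon_k\ge0$, so $\Xi<0$ forces $\Delta V<0$, i.e.\ asymptotic stability. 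For claim \textbf{(ii)} I keep $w_k\neq0$, assume the zero initial condition $x_0=0$, and sum the dissipation inequality over $k$; the telescoping of $\Delta V$ leaves $V(x_\infty)\ge0$ and the $\delta$-slack $\ge0$ on the negative side, so $\sum_k\epsilon_k^T\epsilon_k\le\varepsilon^2\sum_k w_k^Tw_k$, which is the norm bound \eqref{condition1bis} (absorbing the constant $d$ in $y_k=d\epsilon_k$). Thus both conclusions rest on the single condition $\Xi<0$.

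It remains to convert $\Xi<0$ into the LMI \eqref{eq-thm-1}. I would first remove the quadratic Lyapunov term $\mathcal{M}^T\mathbb{P}\mathcal{M}$ by a Schur complement, which creates an auxiliary block carrying $-\mathbb{P}^{-1}$ and the coupling column $\mathcal{M}^T$; then apply the congruence $\diag(\mathbb{Y},\mathbb{I},\mathbb{I},\dots)$ with $\mathbb{Y}=\mathbb{P}^{-1}$, which turns $-\mathbb{P}$ into $-\mathbb{Y}$ and simultaneously linearizes the bilinear product $\mathbb{B}\mathbb{K}$ via $\mathbb{S}=\mathbb{K}\mathbb{Y}$, so that the upper coupling block becomes $\mathbb{Y}\mathbb{A}^{T}-\mathbb{S}^{T}\mathbb{B}^{T}$ as in \eqref{eq-thm-1}. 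The two remaining terms that are quadratic in $\mathbb{Y}$, namely $\mathbb{Y}\mathbb{C}^T\mathbb{C}\mathbb{Y}$ from the regulated output and $\delta\,\mathbb{Y}\mathbb{H}^T\mathbb{H}\mathbb{Y}$ from the Lipschitz bound, are each eliminated by a further Schur complement, introducing the blocks $(\mathbb{Y}\mathbb{C}^T,-\mathbb{I})$ and $(\mathbb{Y}\mathbb{H}^T,-\delta^{-1}\mathbb{I})$; this reproduces the last two rows and columns of \eqref{eq-thm-1}, and minimizing $\varepsilon$ over feasible $(\mathbb{S},\mathbb{Y},\delta)$ gives the stated convex program.

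The main obstacle is the simultaneous appearance of $\mathbb{P}$ and $\mathbb{P}^{-1}$ together with the product $\mathbb{B}\mathbb{K}$: the raw condition $\Xi<0$ is bilinear and is not yet an LMI. The decisive move is the congruence $\mathbb{Y}=\mathbb{P}^{-1}$ combined with $\mathbb{S}=\mathbb{K}\mathbb{Y}$, applied so that \emph{every} occurrence of $\mathbb{P}$, $\mathbb{K}$ and of the quadratic terms $\mathbb{C}^T\mathbb{C}$ and $\mathbb{H}^T\mathbb{H}$ is converted consistently in a single step. Keeping the S-procedure multiplier $\delta$ as a free scalar, rather than fixing it, is precisely what allows the $\mathbb{H}$-block to close up as $-\delta^{-1}\mathbb{I}$ without reintroducing a product of unknowns; verifying that the congruence preserves definiteness (so that feasibility of \eqref{eq-thm-1} is equivalent to, not merely sufficient for, $\Xi<0$) is the point that needs the most care.
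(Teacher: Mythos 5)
Your proposal is correct and follows essentially the same route as the paper's own proof: the quadratic storage function $V_k=x_k^T\mathbb{P}x_k$, the dissipation inequality with the $\cHi$ cost, the S-procedure with multiplier $\delta$ for the Lipschitz bound \eqref{condition1aa}, Schur complements, and the congruence $\mathbb{Y}=\mathbb{P}^{-1}$, $\mathbb{S}=\mathbb{K}\mathbb{Y}$. If anything, you are more explicit than the paper about how the single condition $\Xi<0$ yields the two separate claims (setting $w_k=0$ for stability, and telescoping the sum from $x_0=0$ for the attenuation bound), and about the $d$-factor in $y_k=d\epsilon_k$, which the paper glosses over.
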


\begin{proof}
Let us define the Lyapunov function $V_k=x_k^T\mathbb{P}x_k$ with $\mathbb{P}=\mathbb{P}^T>0$ and the $\cHi$ cost \citep{BEFB:94} in equation \eqref{condition1bis} as follows
\begin{equation}\label{cost1}
J\triangleq\sum_{k=0}^{N-1}\displaystyle\left(y_k^Ty_k -\varepsilon^2 w_k^Tw_k\right),
\end{equation}
over the time interval $[0, N\!-\!1]$, $N\!\in\!\N$. If $V_0=0$, {\it i.e}, all the initial conditions are null then, inequality \eqref{condition1bis} holds if the following satisfies along the trajectories of system \eqref{eq-proof1}
\begin{align}\label{eq-proof2a}
J<\sum_{k=0}^{N-1}\!\left(x^T_k\mathbb{C}^T\mathbb{C}x_k-\varepsilon^2w_k^Tw_k+V_{k+1}-V_k\right)\!.
\end{align}

A sufficient condition to fulfill the inequality \eqref{eq-proof2a} is to guarantee for all $k\in\mathbb{Z}_{\geqslant0}$
\begin{equation}\label{eq-proof3}
x^T_k\mathbb{C}^T\mathbb{C}x_k-\varepsilon^2w_k^Tw_k+V_{k+1}-V_k<0.
\end{equation}

Inequality \eqref{eq-proof3} can be written as follows
\begin{align}\label{eq-proof4}
&x^T_k\left[(\mathbb{A-BK})^T\mathbb{P}\mathbb{(A-BK)}\!-\!\mathbb{P}\right]x_k\!+\!x_k^T\mathbb{(A-BK)}^T\mathbb{P}\mathbb{R}w_k\!+\!x_k^T\mathbb{(A\!-\!BK)}^T\mathbb{P}\varphi(x_{k})\!+\!w^T_k\mathbb{R}^T\mathbb{P}\mathbb{(A\!-\!BK)}x_k\!
\notag\\&+\!w_k^T\mathbb{R}^T\mathbb{P}\mathbb{R}w_k\!+\!\varphi^T\!(x_{k})\mathbb{P}\mathbb{(A\!-\!BK)}x_k+\!\varphi^T\!(x_{k})\mathbb{P}\varphi(x_{k})\!+\!\varphi^T\!(x_{k})\mathbb{P}\mathbb{R}w_k+w_k^T\mathbb{R}^T\mathbb{P}\varphi(x_{k})+x^T_k\mathbb{C}^T\mathbb{C}x_k-\varepsilon^2w^T_kw_k\!<\!0.
\end{align}

Inequality \eqref{eq-proof4} can be expressed as
\begin{equation}\label{eq-proof9}
\displaystyle \begin{bmatrix} x_k\\ w_k\\ \varphi(x_{k})\end{bmatrix}^T \!\!\begin{bmatrix} \Xi & \mathbb{(A\!-\!BK)}^T \mathbb{P}\mathbb{R}& \mathbb{(A\!-\!BK)}^T \mathbb{P}\\
(\star) &  -\varepsilon^2\mathbf{I}\!+\!\mathbb{R}^T\mathbb{P}\mathbb{R} & \mathbb{R}^T\mathbb{P}
\\(\star) &  (\star) & \mathbb{P} \end{bmatrix}\begin{bmatrix} \!x_k\\w_k\\ \varphi(x_{k})\end{bmatrix}\!\!<\!0,
\end{equation}
where $\Xi=-\mathbb{P}+\mathbb{(A-BK)}^T \mathbb{P}\mathbb{(A-BK)}+\mathbb{C}^T\mathbb{C}$.

Using the fact that constraint \eqref{condition1aa} is equivalent to the following quadratic inequality for any $\delta>0$
\begin{equation}\label{eq-proof9aa}
\displaystyle \delta \begin{bmatrix} x_k\\ w_k\\ \varphi(x_{k})\end{bmatrix}^T \!\!\begin{bmatrix} \mathbb{H}^T\mathbb{H}&  \mathbf{0} &  \mathbf{0} \\
(\star) &   \mathbf{0} & \mathbf{0}
\\(\star) &  (\star) & -\mathbf{I}\end{bmatrix}\begin{bmatrix} \!x_k\\w_k\\ \varphi(x_{k})\end{bmatrix}\!\!\geqslant\!0,
\end{equation}
\noindent Combining together \eqref{eq-proof9} and \eqref{eq-proof9aa} gives
\begin{equation}\label{eq-proof9cc}
\displaystyle \begin{bmatrix} x_k\\ w_k\\ \varphi(x_{k})\end{bmatrix}^T \!\!\begin{bmatrix} \Xi_1 & \mathbb{(A\!-\!BK)}^T \mathbb{P}\mathbb{R}& \mathbb{(A\!-\!BK)}^T \mathbb{P}\\
(\star) &  -\varepsilon^2\mathbf{I}\!+\!\mathbb{R}^T\mathbb{P}\mathbb{R} & \mathbb{R}^T\mathbb{P}
\\(\star) &  (\star) & \mathbb{P}-\delta \mathbf{I} \end{bmatrix}\begin{bmatrix} \!x_k\\w_k\\ \varphi(x_{k})\end{bmatrix}\!\!<\!0,
\end{equation}
where $\Xi_1=-\mathbb{P}+\mathbb{(A-BK)}^T \mathbb{P}\mathbb{(A-BK)}+\mathbb{C}^T\mathbb{C}+\delta \mathbb{H}^T\mathbb{H}$.

Using the well-known Schur complement, we obtain inequality \eqref{eq-proof10}.
\begin{table*}[!t]
{\normalsize
\begin{center}
\line(1,0){480}
\end{center}
\begin{equation}\label{eq-proof10}
\displaystyle \begin{bmatrix} -\mathbb{P}\!+\!\mathbb{C}^T\mathbb{C}\!+\!\nu  \mathbb{H}^T\mathbb{H}& \mathbf{0}& \mathbf{0}\\ (\star) & -\varepsilon^2 \mathbf{I} & \mathbf{0}  \\(\star) & (\star) & -\nu \mathbf{I}\end{bmatrix}\!+\!\begin{bmatrix}  \mathbb{(A-BK)}^T\mathbb{P}\\  \mathbb{R}^T\mathbb{P} \\ \mathbb{P}\end{bmatrix}\mathbb{P}^{-1}\!\begin{bmatrix} \mathbb{P} \mathbb{(A-BK)}& \mathbb{P}\mathbb{R}  &\mathbb{P}\end{bmatrix}\!\!<\!0,
\end{equation}
\begin{center}
\line(1,0){480}
\end{center}}
\vspace{0.15cm}
\end{table*}

Applying again the Schur complement to \eqref{eq-proof10}, then we obtain the following sufficient condition
\begin{equation}\label{eq-proof16}
\begin{bmatrix} -\mathbb{P}&\mathbf{0} & \mathbf{0} &\mathbb{(A-BK)}^T\mathbb{P} & \mathbb{C}^T & \mathbb{H}^T
\\ (\star) & -\varepsilon^2\mathbf{I} & \mathbf{0} &\mathbb{R}^T\mathbb{P} & \mathbf{0} & \mathbf{0}
\\ (\star) & (\star) &-\delta \mathbf{I} &\mathbb{P} & \mathbf{0} & \mathbf{0}
\\ (\star) & (\star) &(\star)  &-\mathbb{P} & \mathbf{0}  & \mathbf{0}
\\ (\star) & (\star) &(\star)  &(\star) & -\mathbf{I}& \mathbf{0}
\\ (\star) & (\star) &(\star)  &(\star) & (\star) & -\delta^{-1}\!\mathbf{I}\end{bmatrix}\!<\!0.
\end{equation}
Pre- and post-multiplying inequality \eqref{eq-proof16} by\\ $\diag[\mathbb{Y}, \, \mathbf{I},\, \mathbf{I},\, \mathbb{Y},\, \mathbf{I},\, \mathbf{I}]$ and using $\mathbb{Y}\!=\!\mathbb{P}^{-1}$
and $\mathbb{S}\!=\!\mathbb{K}\mathbb{Y}$ which yields to the LMI \eqref{eq-thm-1}.

The fulfillment of inequality $\eqref{eq-thm-1}$ implies the fulfillment of the optimality condition:
\begin{equation}\label{eq-proof18}
\norm{y_k}\leqslant \varepsilon\norm{w_k}, \quad u_k\neq 0, \, w_k\neq 0.
\end{equation}
This completes the proof.
\end{proof}

\section{Numerical simulations}\label{results}
This section evaluates both the quality of the pointing error performance and the communication performance metrics to ensure accurate pointing and improve the optical communication link performance.

\subsection{Pointing Error Performance}\label{sub-sec1}
To evaluate the quality of the pointing error obtained by the LMI condition given in \eqref{eq-thm-1}. The YALMIP interface \citep{Lof:04} to MATLAB 8.5 with SDPT3 optimization toolbox \citep{TTT:99} was used to provide solutions. The performance of the pointing error control strategy is visualized through Fig. \ref{fig-Pointing} where $\varepsilon=0.32$ and $\mathbb{K}=\begin{bmatrix} 0.275 & -0.019\end{bmatrix}$. We observe clearly that the closed-loop pointing maintains a small alignment error between the optical beam transmitter and the receiving aperture system with a relatively error amplitude less than $\pm0.2$ {\si mm} from the detector diameter, which is $\pm0.5$ {\si mm}. Therefore the maximum relative error is $\pm40\%$.

\begin{figure}[!t]
\centering
      \begin{overpic}[scale=0.35]{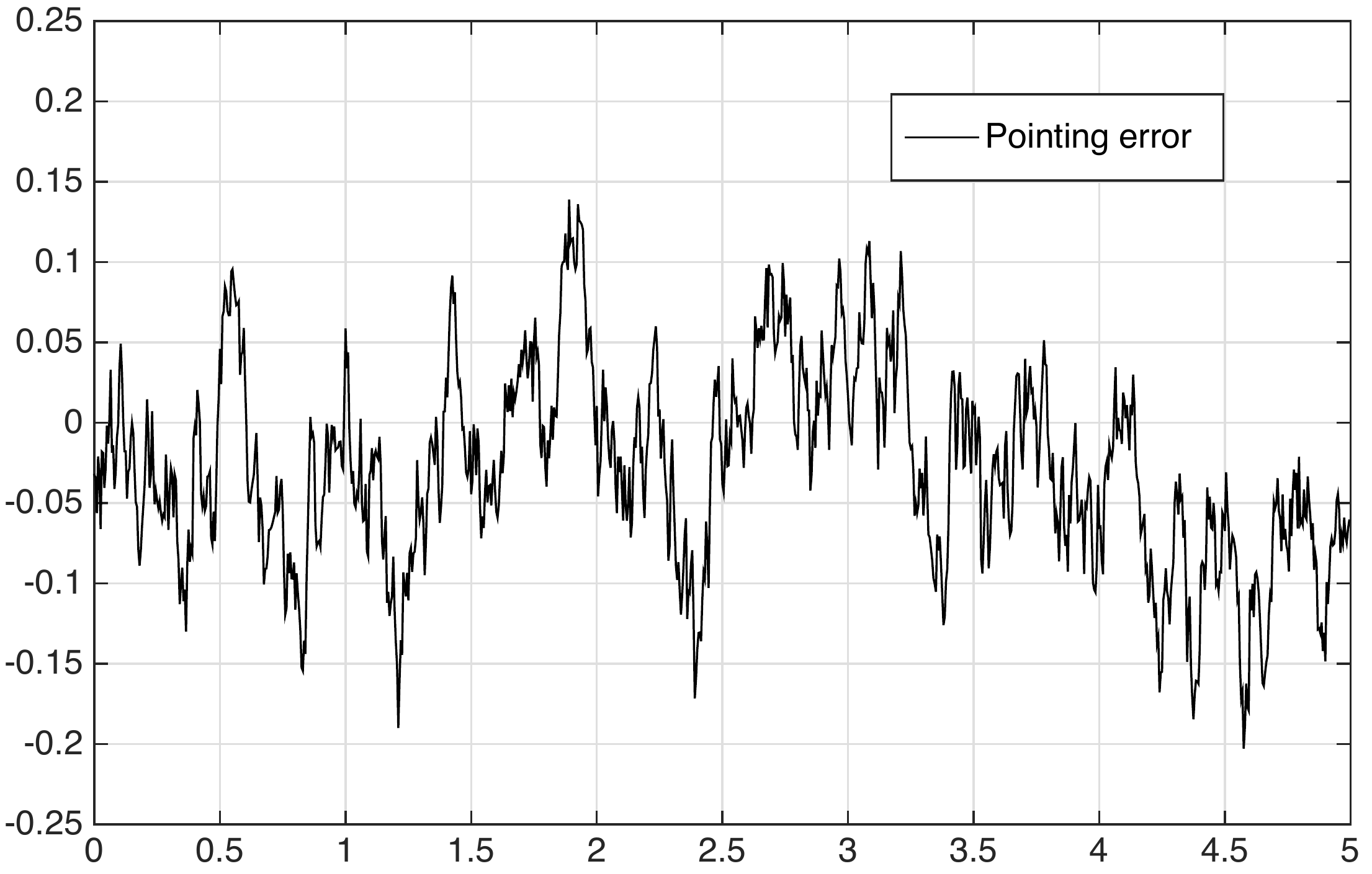}
         \put(-0.5,22){\scriptsize  \begin{rotate}{90} pointing error $y_k$ [{\si mm}]\end{rotate}}
 \put(44,-2){\scriptsize  Time [{\si sec}]}
            \end{overpic}  \vspace{-2pt}
              \caption{Closed-loop pointing error versus Time.}\label{fig-Pointing}    
\end{figure}

Statistical simulated values provide more analytical insight. Figs. \ref{fig-OO} and \ref{fig-CC} depict histogram data of the open-loop and closed-loop output-error displacements, respectively. The open-loop error displacement varies from $+2$ {\si mm} to $+7$ {\si mm}. The performance of open-loop with an output-error variance $\sigma^2\!=\!0.0380$ as expected due to bias and drift terms is inadequate and do not satisfactorily stabilize the beam at the center. For the closed-loop output-error feedback, the beam is stabilized at the center of the sensing device, the closed-loop error displacement varies from $-3$ {\si mm} to $+3$ {\si mm} and the output-error variance is $\sigma^2\!=\!0.0231$ which is $\pm 39\%$ reduced from the open-loop output error.
\begin{figure}[!t]
        \centering
     \begin{overpic}[scale=0.33]{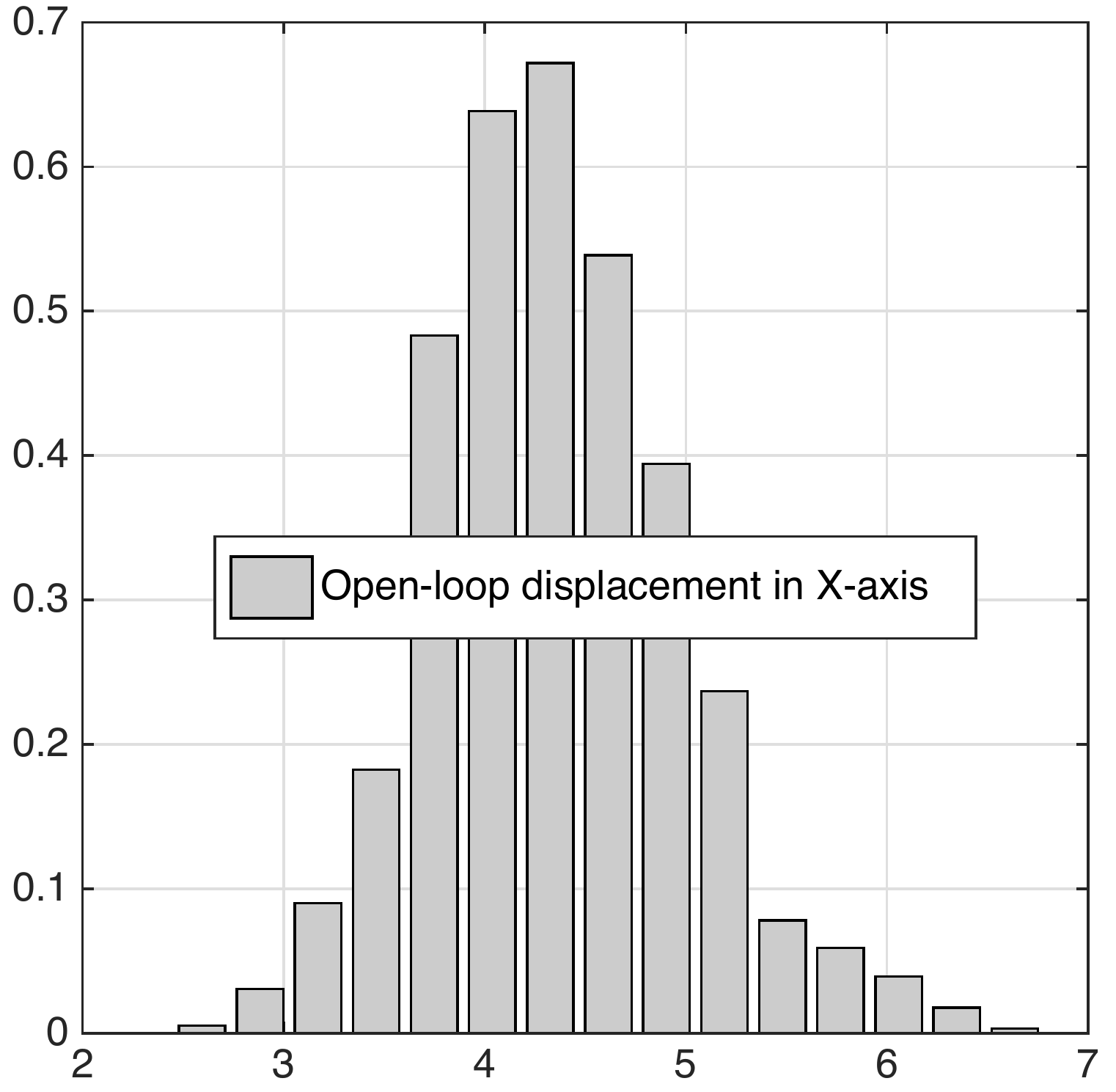}
   \put(-2,35){\scriptsize  \begin{rotate}{90} Probability density \end{rotate}}
 \put(39,-3){\scriptsize  Distance [mm]}
            \end{overpic}  \vspace{-2pt}
              \caption{Open-loop output-error displacement.}\label{fig-OO}
\end{figure}
\begin{figure}[!t]
        \centering
     \begin{overpic}[scale=0.33]{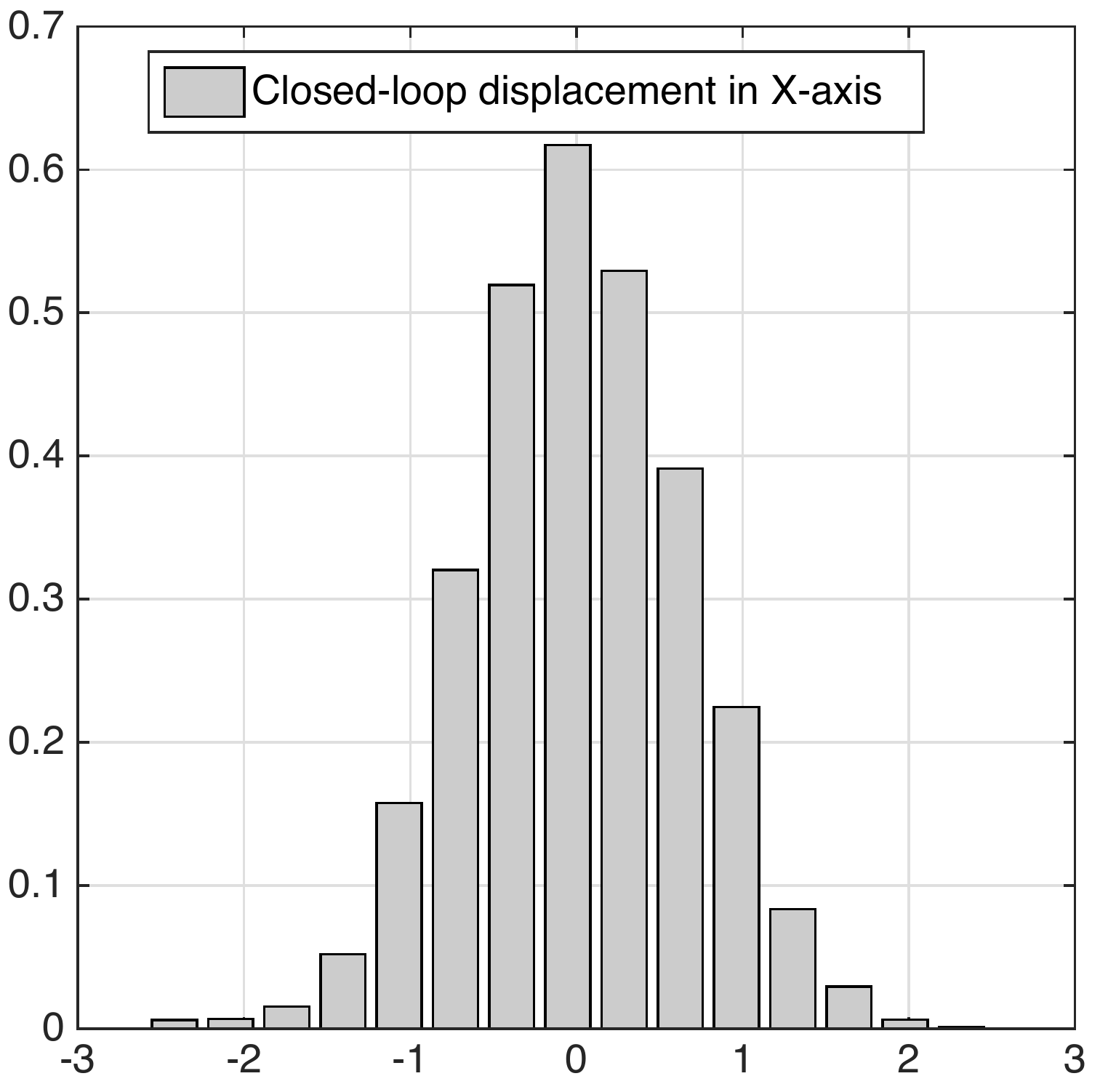}
   \put(-2,35){\scriptsize  \begin{rotate}{90} Probability density \end{rotate}}
 \put(38,-3){\scriptsize  Distance [mm]}
            \end{overpic}  \vspace{-2pt}
              \caption{Closed-loop output-error displacement.}\label{fig-CC}
\end{figure}

\subsection{Communication Performance Metrics}\label{sub-sec1a}
The outage probability error and bit error rate (BER) are metrics for quantifying communication systems' performance in fading channels. FSO system with a good average BER can temporarily suffer from increases in pointing error rate due to fading effects \citep{FaH:07,AlS:02,GPR:12,YCT:14}. The outage probability is given as follows \citep{GPR:12}
\begin{equation}\label{outage1}
P_{o}(I)=\!\!\!\int_0^{I_0/m}\!\!\!\!\!\frac{1}{\sqrt{2\pi\sigma^2}}\frac{1}{I}\displaystyle\exp\left\{-\frac{(\ln(I/I_0)+\sigma^2/2)^2}{2\sigma^2}\right\} \der \!I,
\end{equation}
where $m$ is the power margin which is introduced to account for the extra power needed to cater for turbulence-induced signal fading. Using Chernoff upper bound on \eqref{outage1}, an approximate power margin, $m$, needed to obtain $P_{o}$ can be obtained as follows \citep{GPR:12}
\begin{equation}\label{outage2}
m\approx\exp\left(\sqrt{-2\ln2 P_{o}\sigma^2} +\sigma^2/2\right).
\end{equation}

The evaluation of this outage probability error for the FSO link in open-loop and closed-loop conditions is depicted in Fig. \ref{fig-Outage}. The closed-loop outage probability error which output-error variance $\sigma^2\!=\!0.0231$ is reduced of $1.5$dBm from the open-loop outage probability error which output-error variance $\sigma^2\!=\!0.0380$. For example, to achieve an outage probability of $10^{-6}$ about $34.3$ {\si dBm} of extra power is needed in open-loop condition at $\sigma^2\!=\!0.0380$. This is reduced to $33.2$ {\si dBm} in closed-loop condition as the scintillation strength decreases to $\sigma^2\!=\!0.0231$.

The average BER of OOK-based FSO in atmospheric turbulence is defined as \citep{YCT:14}
\begin{equation}\label{ber1}
\mbox{BER}=\int_0^\infty p(I)Q\left(\frac{\eta I}{\sqrt{2N_0}}\right)\der \!I,
\end{equation}
where $\displaystyle Q(x)\!\!=\!\!\!\int_x^{+\infty}\!\!\!\exp(-t^2/2)\der\! t$, $\eta$ is the optical-to-electrical conversion coefficient,  $I$ represents the received optical intensity signal, $N_0$ is the additive white Gaussian noise power spectral density. The integration in \eqref{ber1} can be efficiently computed by Gauss-Hermite quadrature formula \citep{NUK:07,GPR:12,AlG:99,Osc:02,YCT:14}. Fig. \ref{fig-BER} shows the BER plots of OOK-based FSO in atmospheric turbulence corresponding to the open-loop and closed-loop data rate values at various levels of output-error variance. As we can see,  the effect of turbulence strength on the amount of signal-to-noise ratio (SNR) is required to maintain a given error performance level. From Fig. \ref{fig-BER}, it can be inferred that atmospheric turbulence can causes SNR penalty, which might affect the pointing error, for example, around $16$ {\si dB} of SNR is needed in open-loop condition to achieve a BER of $10^{-1}$ due to the very weak scintillation of strength $\sigma^2\!=\!0.0380$. It decreases by over $2$dB with the closed-loop control as the scintillation strength decreases to $\sigma^2\!=\!0.0231$, which implies that pointing error control strategy might be required to avoid a BER floor in the system~performance.     

\begin{figure}[!t]
        \centering
     \begin{overpic}[scale=0.39]{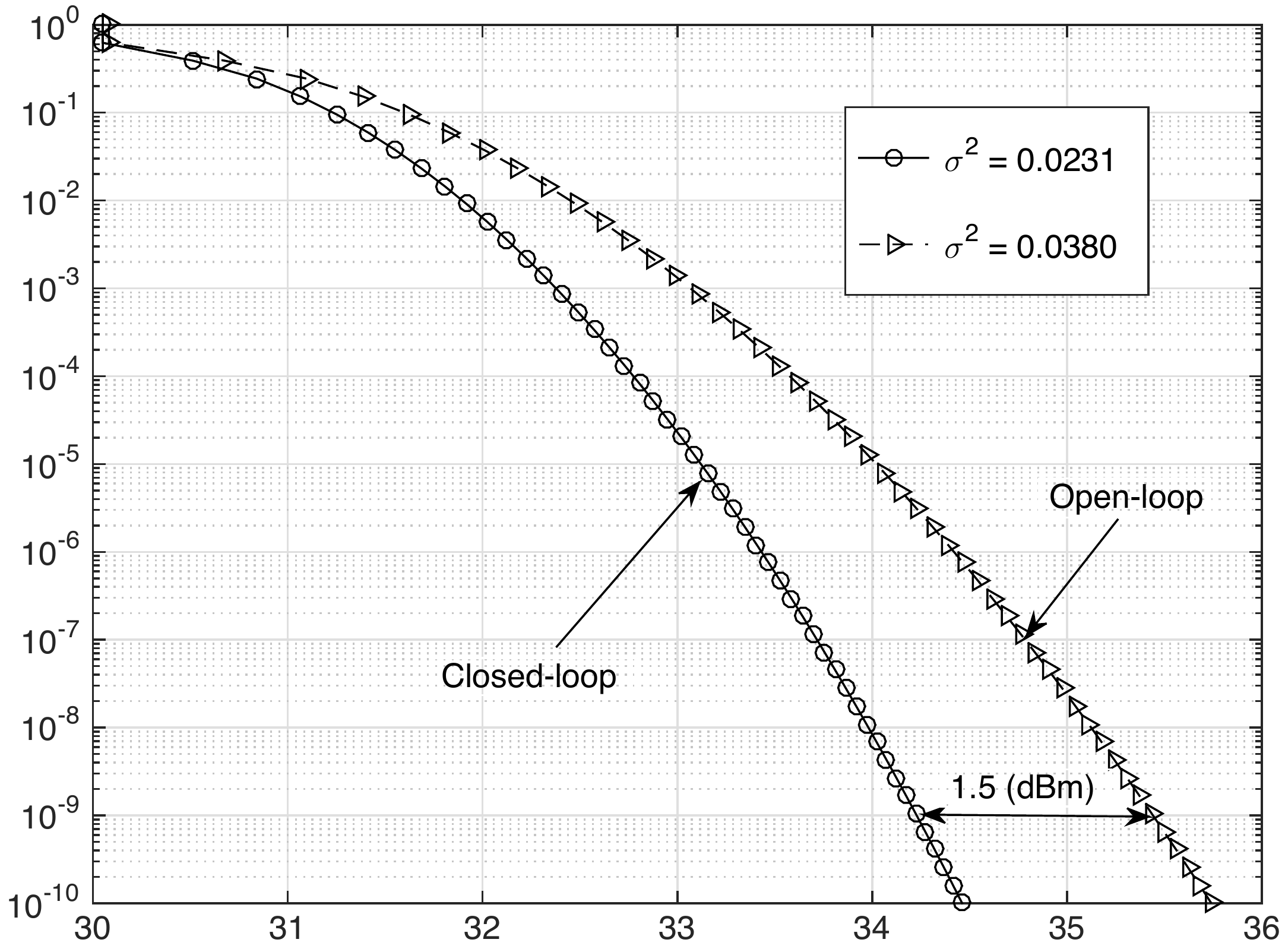}
             \put(-1,25){\footnotesize  \begin{rotate}{90} Outage Probability \end{rotate}}
 \put(40,-3){\footnotesize  Power Margin ({\si dBm})}
            \end{overpic}  \vspace{-2pt}
              \caption{Open-loop and closed-loop outage probability errors.}\label{fig-Outage}
\end{figure}
\begin{figure}[!t]
        \centering
     \begin{overpic}[scale=0.40]{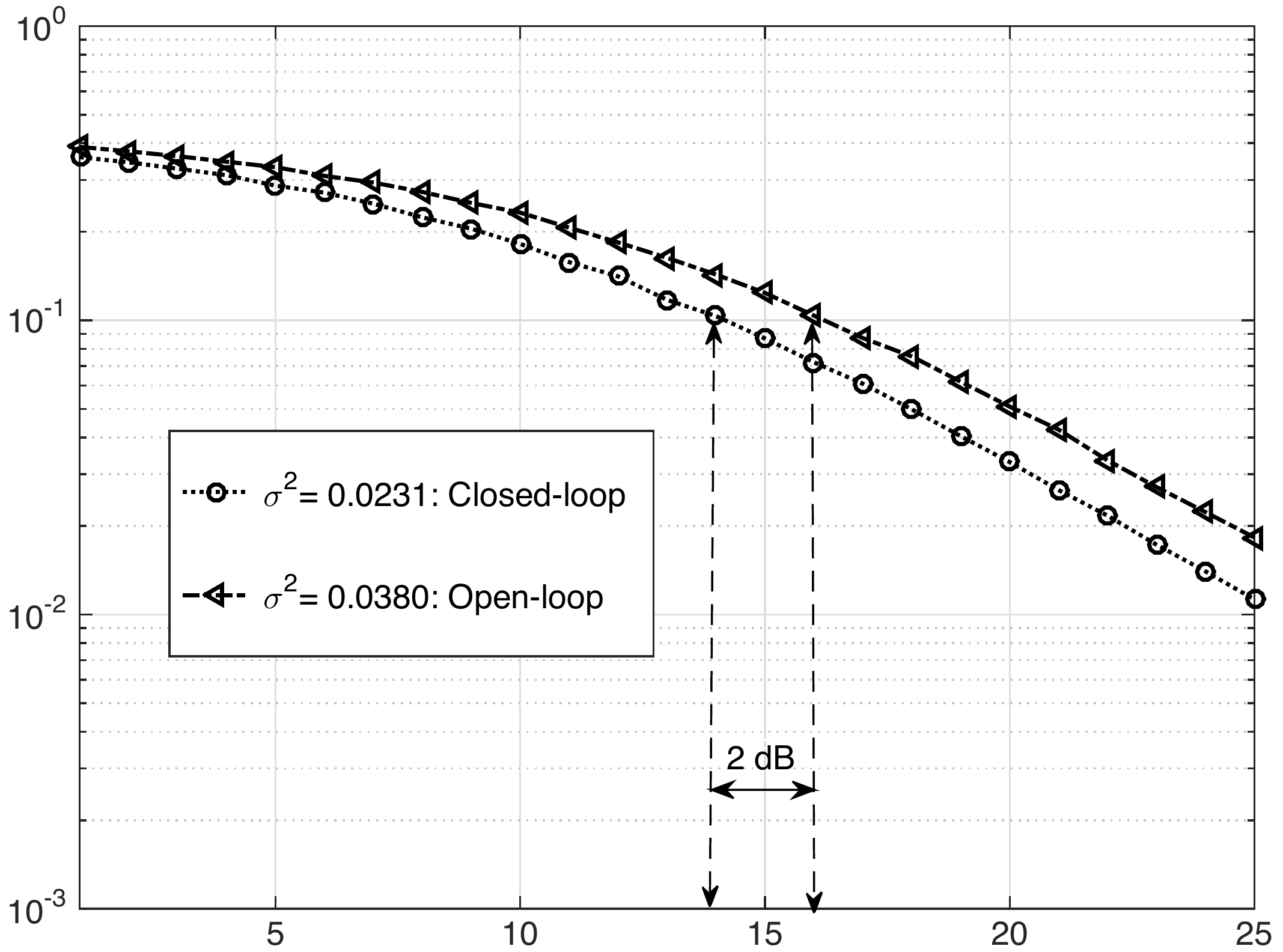}
        \put(-1,35){\footnotesize  \begin{rotate}{90} BER \end{rotate}}
 \put(46,-3){\footnotesize  SNR ({\si dB})}
            \end{overpic}  \vspace{-2pt}
              \caption{Open-loop and closed-loop BER performances of OOK-based FSO in atmospheric turbulence.}\label{fig-BER}
\end{figure}

\section{Conclusion}\label{conclusion}
In this paper, the link performance of the presented FSO link under the influence of the weak atmospheric turbulence has been investigated. The atmospheric turbulence chamber has been characterized theoretically and experimentally for a valid comparison. We found that the fading statistics follow the well-known lognormal distribution that is used for weak turbulence characterization. Based on that, a deterministic nonlinear discrete-time model for pointing error loss due to misalignment has been derived. We then investigate the $\cHi$ norm optimization problem that guarantees the closed-loop pointing error is stable and ensures the prescribed disturbance attenuation level. The closed-loop pointing error from the numerical simulation shows the center of the optical beam close enough to the receiving aperture center, verifying the efficiency of our proposed robust pointing error control for FSO communication systems.

\bibliography{MZ-1}

\end{document}